\documentclass{article} 
\usepackage{iclr2025_conference,times}


\usepackage{amsmath,amsfonts,bm}









\def\eqref#1{equation~\ref{#1}}









\def\ceil#1{\lceil #1 \rceil}

\def\1{\bm{1}}










\DeclareMathAlphabet{\mathsfit}{\encodingdefault}{\sfdefault}{m}{sl}
\SetMathAlphabet{\mathsfit}{bold}{\encodingdefault}{\sfdefault}{bx}{n}













\usepackage{hyperref}
\usepackage{url}
\usepackage{graphicx} 
\usepackage{amsmath}
\usepackage[noend,noline,ruled,linesnumbered]{algorithm2e}

\newtheorem{theorem}{Theorem}
\newtheorem{lemma}{Lemma}
\newtheorem{fact}{Fact}

\newtheorem{definition}{Definition}

\newtheorem{task}{Task}

\usepackage{braket}

\newcommand{\Acal}{\mathcal{A}}

\newcommand{\Ecal}{\mathcal{E}}

\newcommand{\pbra}[1]{\left( #1 \right)}
\newcommand{\Ucal}{\mathcal{U}}

\newcommand{\Ubb}{\mathbb{U}}

\newcommand{\UDI}{\mathrm{UDI}}
\newcommand{\LDI}{\mathrm{LDI}}

\usepackage{xcolor}
\definecolor{mycolor}{RGB}{0, 128, 255}  

\title{Learning the Complexity of Weakly Noisy Quantum States}


%

\author{Yusen Wu$^{1,2}$, Bujiao Wu$^{*3,4,5,6}$, Yanqi Song$^{7}$, Xiao Yuan$^{6}$ \& Jingbo B. Wang$^{\dagger 2}$\\
$1$ School of Artificial Intelligence, Beijing Normal University, Beijing, 100875, China\\
$2$ Department of Physics, The University of Western Australia, Perth, WA 6009, Australia\\
$3$ Shenzhen Institute for Quantum Science and Engineering, \\
Southern University of Science and Technology, Shenzhen 518055, China\\
$4$ International Quantum Academy, Shenzhen 518048, China\\
$5$ Dahlem Center for Complex Quantum Systems, Freie Universit\"{a}t Berlin, 14195 Berlin, Germany\\
$6$ Center on Frontiers of Computing Studies, Peking University, Beijing 100871, China\\
$7$ China Academy of Information and Communications Technology, Beijing, 100191, China\\
$^*$\texttt{bujiaowu@gmail.com},
$^{\dagger}$\texttt{jingbo.wang@uwa.edu.au}
}
\iclrfinalcopy 
\begin{document}

\maketitle

\begin{abstract}
Quantifying the complexity of quantum states is a longstanding key problem in various subfields of science, ranging from quantum computing to the black-hole theory.  
The lower bound on quantum pure state complexity has been shown to grow linearly with system size~\citep{haferkamp2022linear}. However, extending this result to noisy circuit environments, which better reflect real quantum devices, remains an open challenge. In this paper, we explore the complexity of weakly noisy quantum states via the quantum learning method. We present an efficient learning algorithm, that leverages the classical shadow representation of target quantum states, to predict the circuit complexity of weakly noisy quantum states. Our algorithm is proved to be optimal in terms of sample complexity accompanied with polynomial classical processing time. Our result builds a bridge between the learning algorithm and quantum state complexity, meanwhile highlighting the power of learning algorithm in characterizing intrinsic properties of quantum states.
\end{abstract}

\section{Introduction}
The concept of quantum complexity has deep connections to high-energy physics, quantum many-body systems, and black-hole physics~\citep{bouland2019computational, brown2016complexity, brown2016holographic, stanford2014complexity, susskind2016computational}.  A problem is deemed ``easy'' if it is soluble with a (quantum) circuit whose size grows polynomially with the size of the inputs, while the problem is deemed ``hard'' if the size of the (quantum) circuit scales exponentially. In the context of quantum computation, 
the complexity of an $n$-qubit quantum state corresponds to 
the minimum number of gates required to prepare it from the initial state $|0^n\rangle$~\citep{haferkamp2022linear}. Brown and Susskind's conjecture suggests that the complexity of  quantum states generated by random quantum circuits grows linearly before it saturates after reaching an exponential size~\citep{brown2018second, susskind2018black}, which is supported by the complexity geometry theory proposed by~\cite{nielsen2006quantum}. Recent works theoretically proved this conjecture by connecting quantum state complexity to unitary $t$-designs~\citep{brandao2021models,jian2022linear} and the dimension of semi-algebraic sets~\citep{haferkamp2022linear}, demonstrating a rigorous computational ability separation between shallow and deep quantum circuits.

However, in the practical world, the quantum system may interact with the surrounding environment, which inevitably introduces a noise signal, making the pure state noisy. For example, in the current noisy-intermediate-scale-quantum device~\citep{boixo2018characterizing,arute2019quantum, zhong2020quantum,wu2021strong}, a certain level (constant) of noise exists in each quantum gate, resulting in noisy states prepared by $\Omega(\log n)$-depth noisy circuits is classically simulable in both mean value computation and random circuit sampling problems~\citep{stilck2021limitations,aharonov2022polynomial}. These facts exhibit a trend that is completely different from the case of pure quantum states: increasing the depth of the noisy circuit reduces the quantum state complexity.
However, it is still unclear how to extend previous arts~\citep{brandao2021models,jian2022linear,haferkamp2022linear} to the noisy environment, and how to characterize the noisy quantum state complexity is still an open problem.

On the other hand, learning algorithms have recently been considered a powerful means to study and understand many-particle quantum systems and the associated quantum processes~\citep{carleo2017solving,carrasquilla2017machine,glasser2018neural,torlai2018neural,moreno2020deep,torlai2016learning, schindler2017probing, greplova2020unsupervised, wetzel2017unsupervised, huang2021provably, wu2023quantum,du2022efficient}. Specifically, \cite{huang2021provably} explored the power of the learning algorithm in classifying quantum phases of matter by learning through the classical shadow~\citep{huang2020predicting}, an efficient classical representation of the quantum state. The strong connection between quantum state complexity and quantum phase transition phenomena~\cite{huang2015quantum} motivates us to explore the complexity of noisy quantum states using a learning algorithm.

Here, we focus primarily on the complexity of \textit{weakly noisy quantum states} to avoid the anti-concentration property~\citep{deshpande2022tight}, an undesirable distribution in quantum simulation and quantum approximate optimization algorithms. The $n$-qubit weakly noisy quantum states are generated by circuits with a depth of $\tilde{R} \leq O({\rm poly}\log n)$ and a noise strength of $\mathcal{O}(1/n)$.~\footnote{The noise strength $\mathcal{O}(1/n)$ follows the settings used in quantum error mitigation algorithms~\citep{temme2017error,endo2018practical}.} We focus on an essential and natural problem: \emph{Given ${\rm poly}(n)$ copies of $n$-qubit unknown weakly noisy quantum state, how to predict its complexity?}

To answer this question, we develop a learning approach to solve this problem, which is illustrated in Fig.~\ref{fig:intro_states_relation}. Following the strong quantum state complexity proposed by~\cite{brandao2021models}, the learning algorithm pertains to determine \emph{whether an unknown weakly noisy state $\rho$~\footnote{Here,
the analyzed weakly noisy quantum states may represent ground states of many-body systems, the output state of a noisy NISQ algorithm, or the boundary state of a black hole, which all can be characterized through shadow tomography~\citep{huang2020predicting}, as depicted in Fig.~\ref{fig:intro_states_relation}~(a).} can be $\epsilon$-distinguished from the maximally mixed state by a measurement operator induced by 
a specific quantum circuit architecture (QCA)~(shown in  Fig.~\ref{fig:intro_states_relation} (b)).} The size of the measurement operator provides a state complexity upper bound~\citep{brandao2021models}. More specifically, the problem asks whether there exists a measurement operator $M$ induced by a specific QCA, such that
\begin{align}   \max\limits_{\substack{M=U|0^n\rangle\langle0^n|U^{\dagger}\\U\in\mathcal{U}_{\mathcal{A}}(R)}}\left|{\rm Tr}\left(M(\rho-I_n/2^n)\right)\right|\geq \Omega(1-\epsilon),
\label{Eq:complexity}
\end{align}
where $\mathcal{A}$ denotes a QCA architecture and $\mathcal{U}_{\mathcal{A}}(R)$ contains all $R$-depth QCA circuits induced by $\mathcal{A}$. 
The quantum state complexity can be bounded by $\mathcal{O}(nR)$, provided that Eq.~\ref{Eq:complexity} is satisfied.
In general, the circuit set $\mathcal{U}_{\mathcal{A}}$ contains an exponential number of candidate circuits $U$, making direct enumeration impractical.
We address this challenge by revealing a specific property of QCA circuits, as outlined in Theorem~\ref{lemma:qnnproperty}.
This property allows the behavior of an observable $M$, generated by any circuit in $\mathcal{U}_{\mathcal{A}}$, to be approximated by a linear combination of ${\rm poly}(n)$ random QCA circuits from $\mathcal{U}_{\mathcal{A}}$.
Leveraging this, we reformulate the complexity prediction problem into the optimization of a linear function over a compact set, which is both sample-efficient and computationally efficient, as demonstrated in Theorem~\ref{them:main} and illustrated in Fig.~\ref{fig:intro_states_relation}~(c).
We finally show the sample complexity of our learning algorithm is \emph{optimal} with respect to the noisy circuit depth $\tilde{R}$, as demonstrated in Theorem~\ref{them:opt}.

We note that learning the quantum state complexity has broad
applications Firstly, the predicted complexity enables the classification of the target unknown state $\rho$ into one of the following scenarios: (i) $\rho$ can be approximated by an estimator generated by constant-depth noiseless circuits, which is amenable to classical simulation~\citep{napp2022efficient,bravyi2021classical}, (ii)  $\rho$ can be approximated using noiseless circuits with sub-logarithmic depth, a task that may present classical computational challenges, as discussed in Ref.~\citep{deshpande2022tight}, and (iii) $\rho$ cannot be approximated by any linear combination of circuits constrained to a specific architecture with a depth of at most $\log n$. Besides benchmarking the NISQ computational power (whether its output can be classically simulated), the proposed quantum algorithm and predicted weakly noisy state complexity may have practical applications in various fields, including understanding the black hole theory and the quantum phase transition. For example, in the context of the Anti-de-Sitter-space/Conformal Field Theory (AdS/CFT) correspondence, the ``complexity equals volume'' conjecture~\citep{stanford2014complexity} suggests that the boundary state of the correspondence has a complexity proportional to the volume behind the event horizon of a black hole in the bulk geometry. However, when measuring the boundary state, the interaction with the surrounding environment inevitably introduces a noise signal, making the pure state noisy. Furthermore, in quantum many-body systems, quantum phase transitions occur when the external parameters varies~\citep{sachdev1999quantum}, and the ability to correctly predict the quantum phase transition boundary can help us understand many strong-correlated systems~\citep{zheng2017stripe}. It is known that quantum topological phases can be distinguished by their ground state complexity~\citep{huang2015quantum}, and the shadow tomography~\citep{huang2020predicting,huang2021provably} method utilized quantum channels to provide a noisy state approximation to the ground state. Predicting complexity of such noisy ground state approximations may recognize the topological phases of matter.



\begin{figure}[h]
\begin{center}
\includegraphics[width=0.9\textwidth]{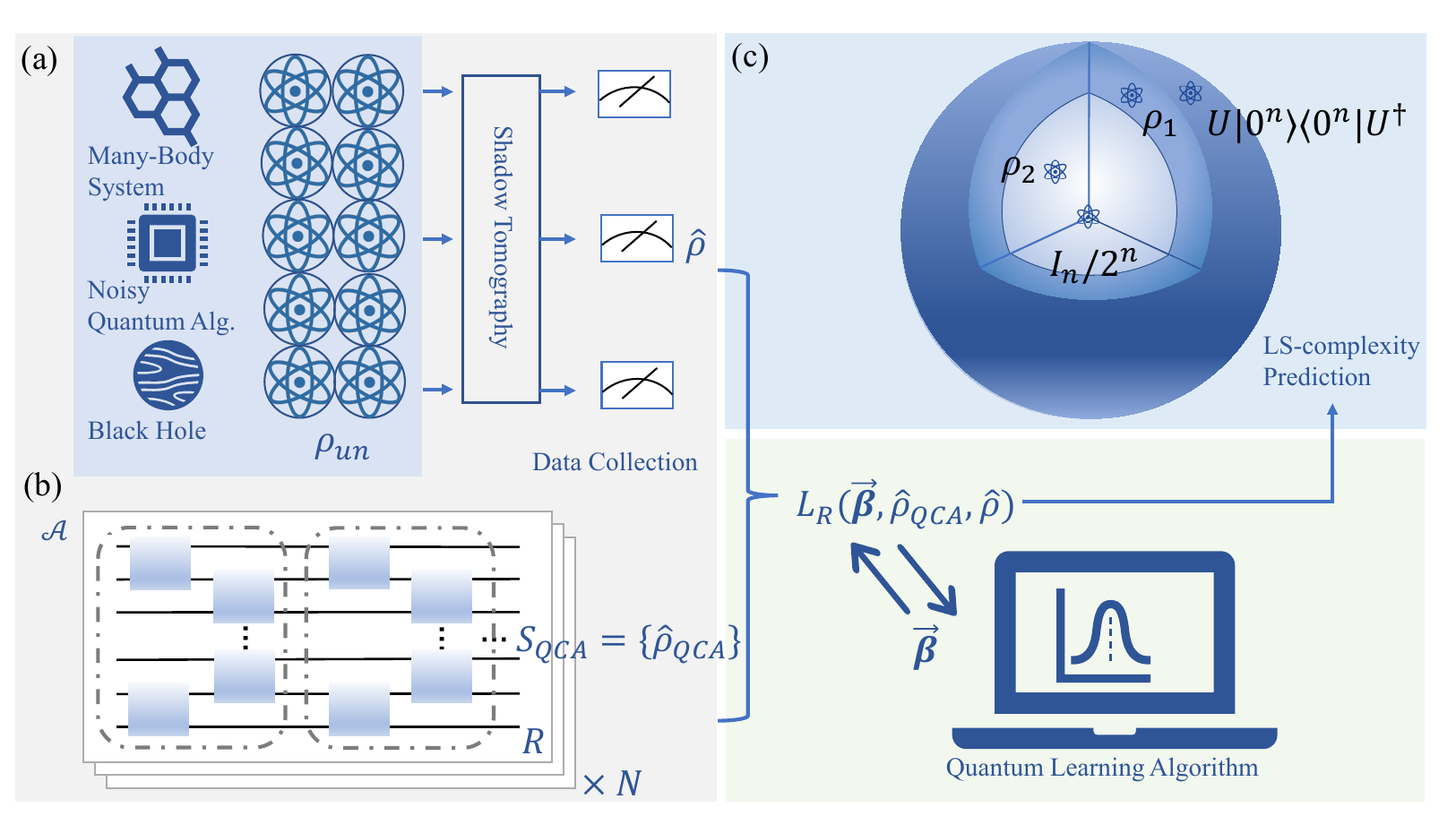}
\end{center}
  \caption{(a) An efficient quantum-to-classical representation conversion method that leverages the classical shadow of a noisy state~\citep{huang2020predicting}. By measuring ${\rm poly}(n)$ copies of the state, it can construct a classical representation $\hat{\rho}$ that enables the prediction of the state's properties with a rigorous performance guarantee. Here, $\rho_{\rm un}$ represents a noisy state generated by a $\tilde{R}$-depth noisy quantum circuit model which is defined as Def.~\ref{Eq:densitymatrix}, and $\hat{\rho}$ represents the classical shadow of $\rho_{\rm un}$. (b) A QCA model with $R$ blocks, where each block corresponds to a causal slice, and the overall architecture follows the design of $\mathcal{A}$. (c) Visualization of the complexity prediction process via a quantum learning algorithm~(Alg.~\ref{alg:QuantumLearning}), where $\mathcal{L}_R(\vec{\bm\beta},\hat{\rho}_{\rm QCA},\hat{\rho})$ represents a metric in measuring the distance between $\hat{\rho}$ and $\hat{\rho}_{\rm QCA}$ states, which is defined in Eq.~\ref{Eq:bayesianloss}. The blue Bloch sphere illustrates the relationship between a $\tilde{R}$-block weakly noisy quantum state $\hat{\rho}$ and its nearest pure state. All pure states reside on the surface of the $n$-qubit Bloch sphere, with the maximum mixed state $I_n/2^n$ located at the center of the sphere. In the regime where $\tilde{R}<\mathcal{O}(\log n)$ and for a small noise strength $p<1/n$, the weakly noisy state $\hat{\rho}=\rho_1$ is located near the surface of the sphere, while $\hat{\rho}=\rho_2$ locates near the maximum mixed state.
  }
  \label{fig:intro_states_relation}
\end{figure}

\section{Theoretical Background}
\label{sec:TheoreticalBackground}
To clearly demonstrate the motivation and contribution of this work, we review the related theoretical backgrounds in terms of quantum state complexity, noisy quantum states, and the architecture of quantum circuits.

Here, we consider the quantum state complexity of an $n$-qubit quantum pure state $|\psi\rangle$. The complexity of a quantum state is the minimal circuit size required to implement a measurement operator that suffices to distinguish $|\psi\rangle\langle\psi|$ from the maximally mixed state $I_n/2^n$. Since any pure state $|\psi\rangle\langle\psi|$ satisfies $\frac{1}{2}\left\||\psi\rangle\langle\psi|-I_n/2^n\right\|_1=1-2^{-n}$, which is achieved by the optimal measurement strategy $M=|\psi\rangle\langle\psi|$, such trace distance can be used to quantify the quantum state complexity. Let $\mathbb{H}_{2^n}$ denote the space of $2^n\times 2^n$ Hermitian matrices, and for fixed $c\in\mathbb{N}$, we consider a class of measurement operators $M_c(2^n)\subset \mathbb{H}_{2^n}$ that can be constructed by at most $c$ $2$-local gates. The maximal bias achievable for quantum states with such a restricted set of measurements of the solution is defined as:
\begin{align}
    \xi_{QS}(c,|\psi\rangle)=\max_{M\in M_c(2^n)} \left|{\rm Tr}\left(M(|\psi\rangle\langle\psi|-I_n/2^n)\right)\right|,
\end{align}
where $|\psi\rangle=V|0^n\rangle$ for some $V\in {\rm SU}(2^n)$. Noting that the above metric $\xi_{QS}(c,|\psi\rangle)$ degenerates to $1-2^{-n}$ when $c\to\infty$. For example, if the quantum state $|\psi\rangle$ can be easily prepared by a quantum computer (such as computational basis states), $\xi_{QS}(c,|\psi\rangle)$ converges to $1-2^{-n}$ rapidly as $c$ increases. In contrast, for a general quantum state $|\psi\rangle$, $\xi_{QS}$ requires an exponentially large $c$ to approach $1-2^{-n}$.
Using this property, the quantum state complexity can be defined as follows:
\begin{definition}[Approximate State Complexity~\citep{brandao2021models}]
\label{Def:statecomp}
Given an integer $c$ and $\epsilon\in(0,1)$, we say a pure quantum state $|\psi\rangle$ has $\epsilon$-strong state complexity at most $c$ if and only if
\begin{align}
     \xi_{QS}(c,|\psi\rangle)\geq 1-\frac{1}{2^n}-\epsilon,
\end{align}
which is denoted as $C_{\epsilon}(|\psi\rangle)\leq c$.
\label{def:state_complexity}
\end{definition}

Due to imperfections in quantum hardware devices, two causal slices~\footnote{Details refer to Appendix~A.} are separated by a quantum channel $\mathcal{E}$. In this paper, we consider $\mathcal{E}$ to represent a general noise channel which is both gate-independent and time-invariant, such as the local-depolarizing channel, the global-depolarizing channel, the bit-flip channel and other common noise models.

\begin{definition}[Weakly Noisy Quantum State]
\label{Eq:densitymatrix}
We assume that the noise in the quantum device is modeled by a gate-independent Pauli noise channel $\mathcal{E}$ with error rate $p$. Let $\Ucal$ be a causal slice, and let $\Ecal \circ \Ucal$ be the representation of a noisy gate. We define the $\tilde{R}$-depth noisy quantum state with noise strength $p$ as
\begin{align}
 \rho_{p,\tilde{R}}=\mathcal{E}\circ \mathcal{U}_{\tilde{R}}\circ \mathcal{E}\circ \mathcal{U}_{\tilde{R}-1}\circ\cdots\circ\mathcal{E}\circ \mathcal{U}_{1}(|0^n\rangle\langle0^n|).
 \label{Eq:density_matrix}
\end{align} 
We use the term ``weakly noisy quantum states'' to refer to noisy states $\rho_{p,\tilde{R}}$ with $\tilde{R}\leq \mathcal{O}({\rm poly}\log n)$ and small error rate $p$ such that $p<\mathcal{O}(n^{-1})$.
\end{definition}

\section{Problem Statement}
\label{Sec:problem}
\subsection{Weakly noisy state complexity}
 In this paper, we assume multiple copies of a weakly noisy quantum state $\rho$ are provided, and we utilize a learning approach to predict its quantum state complexity, accessing only one copy at a time.
In the learning phase, we don't have any information on $\rho$, as a result, it is generally difficult to exclude shortcuts that could improve the efficiency of a computation. As a result, deriving quantum complexity measures for weakly noisy states may be challenging without additional assumptions. Here, we supplement limitations to the operator architecture, which leads to the limited-structured quantum state complexity.

\begin{definition}[Limited-Structured (LS) Complexity of Weakly Noisy State]
Given an integer $c$ and $\epsilon\in(0,1)$, we say a weakly noisy state $\rho$ has $\epsilon$-LS complexity at most $c$ if and only if
\begin{align}   \max\limits_{\substack{M_c=U|0^n\rangle\langle0^n|U^{\dagger}\\U\in\mathcal{U}_{\mathcal{A}}([c/L])}}\left|{\rm Tr}\left(M_c(\rho-I_n/2^n)\right)\right|\geq 1-\frac{1}{2^n}-\epsilon
\label{Eq:LScompleixty}
\end{align}
which is denoted as $C^{{\rm lim},\mathcal{A}}_{\epsilon}(\rho)\leq c$. The notation $L$ represents the number of gates in each layer of $U\in\mathcal{U}_{\mathcal{A}}$\footnote{We leave rigorous definitions to Def.~\ref{def:arch}.}, and $\epsilon$ is termed as the LS error.
\label{def:limit_state_complexity}
\end{definition}

The measurement operator $M_c$ (in Def~\ref{def:limit_state_complexity}) is limited by the architecture $\mathcal{U}_{\mathcal{A}}$. It is interesting to note that the LS complexity provides an upper bound for $C_{\epsilon}(\rho_{\rm un})$, that is, $$C_{\epsilon}(\rho)\leq C^{{\rm lim},\mathcal{A}}_{\epsilon}(\rho)\leq c.$$ 

Now, we provide an insight on why Def.~\ref{def:state_complexity} can be modified to Def.~\ref{def:limit_state_complexity} which characterized the weakly noisy quantum state complexity.
\begin{fact}
Suppose we are given a general noise channel $\mathcal{E}(\cdot)=\sum_{l=1}^KK_l(\cdot)K_l^{\dagger}$ and a $\tilde{R}$-depth noisy state $\rho_{\tilde{R}}=\bigcirc_{r=1}^{\tilde{R}}\mathcal{E}\circ\mathcal{U}_r(|0^n\rangle\langle0^n|)$, where the $K_l$ represent $n$-qubit Kraus operators and the $\mathcal{U}_r$ are drawn independently from a unitary 2-design set $\mathbb{U}$. If the following relationship holds:
\begin{align}
\tilde{R}\leq\frac{\log\left(\frac{F-1}{d(d+1)}(\eta-1/d)^{-1}\right)}{\log(d^2-1)-\log(F-1)},
\label{Eq:upp}
\end{align}
where $d=2^n$ and $F=\sum_{l=1}^K\left|{\rm Tr}(K_l)\right|^2$, then for each noisy state $\rho_{\tilde{R}}$, their corresponding $\tilde{R}$-depth pure state $|\Psi_{\mathcal{U}_1,...,\mathcal{U}_{\tilde{R}}}\rangle=\bigcirc_{r=1}^{\tilde{R}}\mathcal{U}_r(|0^n\rangle\langle0^n|)$ satisfies
\begin{align}
\mathbb{E}_{\mathcal{U}_1,...,\mathcal{U}_{\tilde{R}}\sim \mathbb{U}}\left[\langle\Psi_{\mathcal{U}_1,...,\mathcal{U}_{\tilde{R}}}|\left(\bigcirc_{r=1}^{\tilde{R}}\mathcal{E}\circ\mathcal{U}_r(|0^n\rangle\langle0^n|)\right)|\Psi_{\mathcal{U}_1,...,\mathcal{U}_{\tilde{R}}}\rangle\right]\geq \eta.
\label{Eq:averagefie}
\end{align}
Specifically, $\tilde{R}$ saturates to the bound given in Eq.~\ref{Eq:upp} implies
\begin{align}
    \eta(\tilde{R})=\left(\frac{F-1}{d^2-1}\right)^{\tilde{R}-1}\frac{F-1}{d(d+1)}+\frac{1}{d}.
    \label{Eq:saturate}
\end{align}
\label{them:neccseary_condition}
\end{fact}
We leave proof details to Appendix~\ref{proof:neccseary_condition}. A particularly noteworthy aspect of the presented fact is its applicability to general noise models. This allows us to establish a precise relationship between the depth of a quantum circuit $\tilde{R}$, the noise model $\mathcal{E}$, and the quality of approximation (purity) $\eta(\tilde{R})=1-\epsilon$ that can be achieved. This result further implies using measurement operators prepared by pure states suffice to distinguish a weakly noisy quantum state to the maximally mixed state, as defined by Def.~\ref{def:limit_state_complexity}. 
In light of this, we consider a noisy quantum state with depth $\tilde{R}=\mathcal{O}({\rm poly}\log n)$, where the noise is modeled as a local depolarizing channel $\mathcal{E}_i(\cdot) = (1-p)(\cdot) + p{\rm Tr}_i(\cdot)I_2/2$ and we have $\eta(\tilde{R}) \approx (1 - p\tilde{R})$ by Eq.~\ref{Eq:saturate}. Furthermore, we note that Eq.~\ref{Eq:averagefie} implies that there exists a quantum weakly noisy state $\bigcirc_{r=1}^{\tilde{R}}\mathcal{E}\circ\mathcal{U}_r(|0^n\rangle\langle0^n|)$ and a quantum pure state $|\Psi_{\mathcal{U}_1,...,\mathcal{U}_{\tilde{R}}}\rangle=\mathcal{U}_{\tilde{R}}\cdots\mathcal{U}_{1}|0^n\rangle$ such that 
\begin{align}
    {\rm Tr}\left[M_{\mathcal{U}_{\tilde{R}}\cdots\mathcal{U}_{1}}\left(\bigcirc_{r=1}^{\tilde{R}}\mathcal{E}\circ\mathcal{U}_r(|0^n\rangle\langle0^n|)-I_n/2^n\right)\right]\approx1-p\tilde{R}-2^{-n}
\end{align}
in the worst-case scenario over the choice of $\mathcal{U}_1,...,\mathcal{U}_{\tilde{R}}\sim \mathbb{U}$ (otherwise Eq.~\ref{Eq:averagefie} may be violated), where the measurement operator $M_{\mathcal{U}_{\tilde{R}}\cdots\mathcal{U}_{1}}=|\Psi_{\mathcal{U}_1,...,\mathcal{U}_{\tilde{R}}}\rangle\langle\Psi_{\mathcal{U}_1,...,\mathcal{U}_{\tilde{R}}}|$.
As a consequence, Definition~\ref{Def:statecomp} can be naturally transferred into  Definition \ref{def:limit_state_complexity} which characterizes the complexity of quantum weakly noisy states.


\section{Quantum learning task and algorithm}
\label{Sec:qmldetail}
We start by introducing the intrinsic structure of $\mathcal{U}_{\mathcal{A}}$ to devise a parameterized observable $M$ for distinguishing $\rho$ from the maximal entangled state. In general, optimizing an observable $M=U|0^n\rangle\langle0^n|U^{\dagger}$ with $\mathcal{U}_{\mathcal{A}}(R)$ is challenging, however, the intrinsic structure of a specific QCA allows for efficient optimization of the observable $M$. Specifically, we randomly generate a set of $N={\rm poly}(R,n)$ quantum neural network states $\hat{\rho}_{\rm QCA}(R,\mathcal{A},N)=\{\hat{\rho}_i\}_{i=1}^{N}$ with $\hat{\rho}_i$ represents the classical shadow representation of $U_i|0^n\rangle$ for $U_i\in\mathcal{U}_{\mathcal{A}}(R)$~\citep{huang2020predicting} (Details refer to Appendix~\ref{Shadow}). We then design a parameterized operator that takes the form of $M(\vec{\bm\beta})=\sum_{i=1}^N\beta_i\hat{\rho}_i$.

\begin{theorem}[Intrinsic Structure of specific QCA]
\label{lemma:qnnproperty}
Randomly select $N$ unitaries from the QCA model 
 $\mathcal{U_A}(R)$ to generate $\hat{\rho}_{\rm QCA}(R,\mathcal{A},N)=\{\hat{\rho}_i\}_{i=1}^{N}$, where each layer in $U_i$ contains $L$ gates. Then for any $n$-qubit quantum state $\rho$ and projector $M_{\vec{\bm x}}=U(\vec{\bm x})|0^n\rangle\langle0^n|U^{\dagger}(\vec{\bm x})$ with $U(\vec{\bm x})\in\mathcal{U_A}(R)$\footnote{Here, the vector $\vec{\bm x}\in[0,2\pi]^{LR}$ uniquely determines a quantum circuit $U(\vec{\bm x})\in\mathcal{U}_{\mathcal{A}}(R)$.}, there exists a vector $\vec{\bm\beta}(\vec{\bm x})$ belongs to an $N$-dimensional compact set $\mathcal{D}_{\beta}$\footnote{To keep the semi-definite property of $M(\vec{\bm \beta})$, we generally assume the compact set $\mathcal{D}_{\beta}=[0,1]^N$. } and $\sum_{j=1}^N\vec{\bm\beta}_j(\vec{\bm x})=1$, such that 
\begin{align}
\mathbb{E}_{\vec{\bm x}}\left|{\rm Tr}\left[\sum\limits_{j=1}^N\vec{\bm\beta}_j(\vec{\bm x})\hat{\rho}_i\rho\right]-{\rm Tr}\left[M_{\vec{\bm x}}\rho\right]\right|\leq\sqrt{\frac{LRn^2\log n}{N}}.
\end{align}
If $N=LRn^2\log n/\epsilon^2$,
the above approximation error is upper bounded by $\epsilon$ (We leave proof details to the Appendix~\ref{Proof:theorem1}). 
\end{theorem}

\subsection{Metric Construction}
Here, we assume QCA states $\hat{\rho}_i$ are sampled from $\mathcal{U}_{\mathcal{A}}(R)$ following the probability distribution $\vec{\bm q}$, as a result, the metric
\begin{align}
\mathcal{L}_R(\vec{\bm\beta})=\left|\mathbb{E}_{\hat{\rho}_i\sim\vec{\bm q}}{\rm Tr}\left[M(\vec{\bm\beta})(\hat{\rho}_i-\rho)\right]\right|
\label{Eq:bayesianloss}
\end{align}
is defined over the compact set $\vec{\bm\beta}\in D_{\vec{\bm\beta}}$. Given a specific parameter $\vec{\bm\beta}$, we can efficiently calculate the corresponding value of $\mathcal{L}_R(\vec{\bm\beta})$ using classical shadow techniques~\citep{huang2020predicting, wu2023overlapped,nguyen2022optimizing,akhtar2022scalable,bertoni2022shallow}. 

To clarify our algorithm, we define the upper decision interval $\UDI(\epsilon)$ and the lower decision interval $\LDI(\epsilon)$ as follows.

\begin{definition}[Decision interval]  
$\UDI(\epsilon)$ is defined as the integer interval $[u, \infty)$ such that for any $R \in \UDI(\epsilon)$, it holds that $\max_{\vec \beta} \mathcal{L}_R \pbra{\vec \beta } \leq \epsilon$.  
Similarly, $\LDI(\epsilon)$ is defined as the integer interval $[0, l]$ such that for any $R \in \LDI(\epsilon)$, we have $\min_{\vec \beta} \mathcal{L}_R \pbra{\vec \beta } \geq 2\epsilon$.
\end{definition}

Before proposing the quantum learning algorithm, we need the following lemmas to support our method. We elaborate proof details in Appendices~\ref{proof6} and \ref{proof7}.

\begin{lemma}
Consider the metric function $\mathcal{L}_R(\vec{\bm\beta})$. If the relationship $\max\limits_{\vec{\bm\beta}}\mathcal{L}_R(\vec{\bm\beta})\leq\epsilon$
holds for any distribution $\vec{\bm q}$, then $\rho_{\rm un}$ has the state complexity $C^{{\rm lim},\mathcal{A}}_{\epsilon}(\rho_{\rm un})\leq LR$, where $C^{{\rm lim},\mathcal{A}}_{\epsilon}(\cdot)$ is defined in Def.~\ref{def:limit_state_complexity}.
\label{lemma:lower}
\end{lemma}

\begin{lemma}
If there exists a distribution $\vec{\bm q}$ such that $\min\limits_{\vec{\bm\beta}}\mathcal{L}_R(\vec{\bm\beta})>2\epsilon$, then with nearly unit probability, $\rho_{\rm un}$ follows the quantum state complexity lower bound $C^{{\rm lim},\mathcal{A}}_{\epsilon}(\rho_{\rm un})> LR$.
\label{lemma:upper_new}
\end{lemma}

\subsection{Learning Task Statement}

Given the preliminary background above, we now formally define the learning task.

\begin{task}[Structured Complexity Prediction (${\rm SCP}(\mathcal{A}, \epsilon)$)]  
\label{MainTask}
Given an architecture $\mathcal{A}$, an $n$-qubit weakly noisy quantum state $\rho_{\rm un}$ (as defined in Eq.~\ref{Eq:density_matrix}), and an approximation error $\epsilon$, design a learning algorithm ${\rm QL}_n$ that runs on an ideal quantum device with polynomially many qubits in $n$, and learns from the unknown quantum state $\rho_{\rm un}$, such that the following conditions hold:
\begin{enumerate}
    \item \textbf{(Completeness)} If there exists an integer $\bm{x} < \log n$ such that $\bm{x} \in \UDI(\epsilon)$, then ${\rm QL}_n$ returns \texttt{True}.
    \item \textbf{(Soundness)} If $\log n \in \LDI(\epsilon)$, then ${\rm QL}_n$ returns \texttt{False}.
    \item \textbf{(Indeterminate case)} If all integers $\bm{x} \in [0, \log n]$ lie in $\UDI^c(\epsilon) \cup \LDI^c(\epsilon)$, then ${\rm QL}_n$ may return either \texttt{True} or \texttt{False} arbitrarily.
\end{enumerate}
\end{task}

To provide a clearer analysis of this task, we introduce the following definition.
\begin{definition}[Distinguishable Property, ${\rm DP}(\mathcal{A},\epsilon)$]
We say an $n$-qubit weakly noisy quantum state $\rho_{\rm un}$ satisfies ${\rm DP}(\mathcal{A},\epsilon)$ if $\rho_{\rm un}$ can be distinguished from $I_n/2^n$ with the bias at least $1-2^{-n}-\epsilon$ by using a $R$-depth ($R<\log n$) QCA measurement $M_{\rm QCA}=U|0^n\rangle\langle0^n|U^{\dagger}$, where $U\in\mathcal{U}_{\mathcal{A}}$.
\end{definition}

We note that the learning algorithm ${\rm QL}_n$ returns \texttt{True} (respectively, \texttt{False}) to indicate that $\rho_{\rm un}$ does (does not) satisfy the ${\rm DP}(\mathcal{A},\epsilon)$ property.

\subsection{Quantum Learning Algorithm for Limited-Structured Complexity analysis}
Based on Theorem~\ref{lemma:qnnproperty}, optimizing the metric function $\mathcal{L}_R(\vec{\bm\beta})$ can be limited into a compact set. Taking the maximization task as an example, we show how to utilize the Bayesian optimization on a compact set as Alg.~\ref{alg:BO_max} in Appendix~\ref{proofofBO}, which is termed as the ${\rm BMaxS}$. Specifically, the subroutine ${\rm BMaxS}$, \emph{which is essentially a Bayesian optimization}, takes unknown state $\rho_{\rm un}$, QCA state set, and LS-error $\epsilon$ as inputs, after $T$ steps, it outputs \texttt{True} if $R\in \UDI(\epsilon)$ holds, outputs \texttt{False} otherwise. 
This subroutine can be used as a building block to construct a quantum learning algorithm that verifies whether the completeness condition of Task 1 holds. Likewise, ${\rm BMaxS}$ can also be applied to verify the soundness condition. For the remaining (inconclusive) case, we return a random outcome (\texttt{True} or \texttt{False}), thereby providing an efficient solution to Task 1.

Given the fact that a structured unitary set that $\mathcal{U}_{\mathcal{A}}(R-1)$ is strictly contained in $\mathcal{U}_{\mathcal{A}}(R)$~\citep{haferkamp2022linear}, the boolean function 
\begin{align}
    \mathcal{P}(R,\epsilon)={\rm BMaxS}(\rho_{\rm un},\hat{\rho}_{\rm QCA}(R,\mathcal{A},N) ,T,\epsilon)
    \label{Eq:subroutine}
\end{align}
is a monotone predicate in the interval $R\in[1,\log n]$. Therefore a quantum learning algorithm can be designed by a binary search program where $\mathcal{P}(R,\epsilon)$ is packaged as an oracle. A monotone predicate $\mathcal{P}$ is a boolean function defined on a totally ordered set with the property: if $\mathcal{P}(R,\epsilon)=\texttt{True}$, then $\mathcal{P}(R^{\prime},\epsilon)=\texttt{True}$ for all $R^{\prime}\geq R$ in the domain. 
In our case, $\mathcal{P}$ returns \texttt{True} for the input $R$ when
$R\in \UDI(\epsilon)$ holds. As a result, if the noisy state $\rho_{\rm un}$ satisfies the $\rm DP(\mathcal{A},\epsilon)$ property,
the ${\rm QL}_n$ outputs the minimum $R_{\min}\in[1,\log n]$ enabling $C_{\epsilon}^{{\rm lim},\mathcal{A}}(\rho_{\rm un})\leq LR_{\min}$ (\texttt{True}). Otherwise, we test whether $\log n\in \LDI(\epsilon)$. If this holds, the ${\rm QL}_n$ outputs $C_{\epsilon}^{{\rm lim},\mathcal{A}}(\rho_{\rm un})>L\log n$ (\texttt{False}). Otherwise, the studied $\rho_{\rm un}$ and threshold $\epsilon$ would be an invalid case.
Details are provided in Alg.~\ref{alg:QuantumLearning}.

\begin{algorithm*}
\SetKwInOut{Input}{Input}
\SetKwInOut{Output}{Output}
\SetKwFor{While}{while}{do}{}%
\SetKwFor{For}{for}{do}{}
\SetKwIF{If}{ElseIf}{Else}{if}{do}{elif}{else do}{}%
\Input{Noisy quantum state $\rho_{\rm un}$, $\epsilon$;
}
\Output{
The minimum depth $R$ $({R<\log n})$ such that $C_{\epsilon}^{{\rm lim},\mathcal{A}}(\rho_{\rm un})\leq LR$ (True)\;
{False if such $R$ does not exist; Or return True\slash False arbitrarily for invalid cases\; } 
}
\textbf{Initialize} 
$R\gets 1$, $s\gets \log(n)$\;
\While{$s-R>1$}{
Set 
$N=LRn^2\log n/\epsilon^{2}$
and $T=N^2n^k$ such that $k\log(n)<n^{k/2-1}\epsilon$ for large $n$\;
\If{$\mathcal{P}((R+s)/2,\epsilon)=
$\texttt{True}}
{$s\gets \ceil{(R+s)/2}$}
\Else{$R\gets\ceil{(R+s)/2}$}
}
\If{$\mathcal{P}(R,\epsilon)=$\texttt{True}}{\Return{$C^{{\rm lim},\mathcal{A}}_{\epsilon}(\rho_{\rm un})\leq LR$ (\texttt{True})}}
\Else{ \textbf{if} $\mathcal{P}(R,2\epsilon)=$\texttt{False}, 
\Return{$C^{{\rm lim},\mathcal{A}}_{\epsilon}(\rho_{\rm un})> L\log n$ (\texttt{False})}\\
\Else{\Return{\texttt{True} or \texttt{False} arbitrarily}}
}
\caption{Quantum Learning Algorithm for Limited-Structured Complexity Prediction}\label{alg:QuantumLearning}
\end{algorithm*}

\section{Theoretical Performance Guarantee}
\label{Sec:performance}
We will demonstrate that the required number of samplings and processing time for Alg.~\ref{alg:QuantumLearning} are both efficient, and the sample complexity is optimal with respect to the noisy circuit depth.
Using Theorem~\ref{lemma:qnnproperty}, we can state the following result.
\begin{theorem}
Given ${\rm poly}(n)$ copies of an $n$-qubit unknown weakly noisy state $\rho_{\rm un}$ that is generated by a noisy quantum device with depth $\tilde{R}=\mathcal{O}(\log n)$ (Def.~\ref{Eq:densitymatrix}) and a particular architecture $\mathcal{A}$, there exists a ${\rm poly}(n,\tilde{R},1/\epsilon)$ quantum and classical cost learning algorithm which can efficiently solve the $\rm SCP(\mathcal{A},\epsilon)$ problem. 
\label{them:main}
\end{theorem}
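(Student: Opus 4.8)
The plan is to convert the decision underlying the LS complexity (Definition~\ref{def:limit_state_complexity}) into a maximization that the intrinsic-connection property of Theorem~\ref{lemma:qnnproperty} makes tractable, and then to account every step at $\mathrm{poly}(n,\tilde R)$ cost. First I would observe that every candidate measurement operator is a rank-one projector $M_r=U|0^n\rangle\langle 0^n|U^{\dagger}$ with ${\rm Tr}(M_r I_n/2^n)=2^{-n}$, so that ${\rm Tr}\!\left(M_r(\rho_{\rm un}-I_n/2^n)\right)=\langle 0^n|U^{\dagger}\rho_{\rm un}U|0^n\rangle-2^{-n}$, and since the overlap is nonnegative the bias is maximized by maximizing the overlap. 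Hence deciding $C^{{\rm lim},\mathcal{A}}_{\epsilon}(\rho_{\rm un})\le r$ reduces to deciding whether
\begin{align}
\max_{U\in\mathcal{U}_{\mathcal{A}}([r/L])}\langle 0^n|U^{\dagger}\rho_{\rm un}U|0^n\rangle\ge 1-\epsilon.
\end{align}
The algorithm therefore sweeps candidate depths $R=1,2,\dots,\log n$ (the pure-state-approximation boundary set by Theorem~\ref{them:neccseary_condition} and the SSAP discussion), estimates the maximal overlap between $\rho_{\rm un}$ and each $R$-depth QNN manifold, and reports the smallest $R$ whose maximum clears the threshold, outputting NO if none do.

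The core reduction is to replace the intractable maximization over the continuous manifold $\mathcal{U}_{\mathcal{A}}(R)$ by a maximization over the $N$-dimensional compact set $\mathcal{D}_{\beta}$ of Theorem~\ref{lemma:qnnproperty}. I would draw $N=LRn^2/\epsilon^2$ random circuits $\{U(\vec{\bm\alpha}_j)\}$, treat the overlaps $o_j=\langle 0^n|U^{\dagger}(\vec{\bm\alpha}_j)\rho_{\rm un}U(\vec{\bm\alpha}_j)|0^n\rangle$ as a fixed feature vector, and use that every target observable $M(\vec{\bm x})$ in the manifold has a representative $\vec{\bm\beta}(\bm x)\in\mathcal{D}_{\beta}$ with ${\rm Tr}\!\left(M(\vec{\bm x})\rho_{\rm un}\right)\approx\sum_j\vec{\bm\beta}_j(\bm x)o_j$ up to error $\epsilon$. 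Since each element of $\mathcal{D}_{\beta}$ arises from an actual circuit, the surrogate $\max_{\vec{\bm\beta}\in\mathcal{D}_{\beta}}\sum_j\vec{\bm\beta}_j o_j$ sandwiches the true maximal overlap from both sides up to $\epsilon$: a good $U^{*}$ forces a large surrogate (completeness), and a large surrogate is witnessed by a $\vec{\bm\beta}^{*}$, hence a near-optimal circuit (soundness). This surrogate is a linear functional over the compact convex set $\mathcal{D}_{\beta}$, so it is solvable in $\mathrm{poly}(N)=\mathrm{poly}(n,R)$ classical time.

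The remaining ingredient is that the features $o_j$ are themselves learnable from only $\mathrm{poly}(n)$ copies of $\rho_{\rm un}$. Here I would invoke classical shadow tomography~\cite{huang2020predicting}: each $o_j={\rm Tr}\!\left(|\Psi_j\rangle\langle\Psi_j|\rho_{\rm un}\right)$ is the expectation of a rank-one projector with ${\rm Tr}(|\Psi_j\rangle\langle\Psi_j|^2)=1$, so its shadow norm is $\mathcal{O}(1)$ and a single shadow dataset estimates any fixed $o_j$ to additive accuracy $\epsilon/\mathrm{poly}$ with high probability; a union bound over all $N=\mathrm{poly}(n,R)$ features across the $\mathcal{O}(\log n)$ sweep values costs only a logarithmic factor in the number of shadows. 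Feeding the estimates $\hat o_j$ into the surrogate perturbs its value by at most $\max_{\vec{\bm\beta}\in\mathcal{D}_{\beta}}\|\vec{\bm\beta}\|_1\cdot\max_j|\hat o_j-o_j|$, which I would absorb into the overall $\epsilon$ budget by tuning the shadow accuracy.

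I expect the main obstacle to be the soundness direction together with the \emph{uniform} (rather than average) control of the approximation error. Theorem~\ref{lemma:qnnproperty} bounds only the expectation $\mathbb{E}_{\vec{\bm x}}|\cdots|\le\epsilon$, whereas the decision requires the surrogate maximum neither to overshoot (breaking soundness) nor undershoot (breaking completeness) the manifold maximum at its optimizer. Closing this gap hinges on the explicit geometry of $\mathcal{D}_{\beta}$ from Appendix~\ref{sec:estimatecompactset}: one must verify that the optimizing $\vec{\bm\beta}^{*}$ lies in the genuine image of the parameter map so it certifies an actual near-optimal circuit, and that $\|\vec{\bm\beta}\|_1$ remains $\mathrm{poly}(n,R)$-bounded over $\mathcal{D}_{\beta}$ so the shadow-estimation error does not blow up under the linear combination. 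Once this is settled, iterating over the $\mathcal{O}(\log n)$ candidate depths and combining the per-step $\mathrm{poly}(n,\tilde R)$ quantum and classical costs yields the claimed overall complexity.
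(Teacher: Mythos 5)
Your reduction has a genuine gap, and it sits exactly where you suspected: the soundness direction of your ``sandwich'' claim fails. You replace $\max_{U\in\mathcal{U}_{\mathcal{A}}(R)}\langle 0^n|U^{\dagger}\rho_{\rm un}U|0^n\rangle$ by the surrogate $\max_{\vec{\bm\beta}\in\mathcal{D}_{\beta}}\sum_j\vec{\bm\beta}_j o_j$ and assert that ``since each element of $\mathcal{D}_{\beta}$ arises from an actual circuit,'' a large surrogate certifies a near-optimal circuit. That premise is false: $\mathcal{D}_{\beta}$, as constructed in Appendix~\ref{sec:estimatecompactset}, is a coordinate-wise product of intervals that merely \emph{contains} the image of the map $\bm x\mapsto\vec{\bm\beta}(\bm x)$; the genuine image is an at most $LR$-dimensional subset of this $N$-dimensional box with $N=LRn^2/\epsilon^2\gg LR$, so a generic point (in particular, the linear-programming optimizer, which sits at a vertex of the box) corresponds to no circuit at all. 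Moreover $M(\vec{\bm\beta})=\sum_j\beta_j|\Psi_j\rangle\langle\Psi_j|$ with mixed-sign coefficients summing to one need not satisfy $0\le M\le I$, so $\sum_j\beta_j o_j$ can exceed $1$ without implying anything about $\rho_{\rm un}$. Theorem~\ref{lemma:qnnproperty} is a one-way, average-case statement (circuit $\Rightarrow$ representative $\vec{\bm\beta}$, in expectation over $\vec{\bm x}$); your soundness step needs the converse, which does not hold. The paper explicitly warns against precisely this construction in the remark following Eq.~\ref{Eq:bayesianloss}: the ``more straightforward'' objective $\sum_i\vec{\bm\beta}_i\langle\Psi_i|\rho_{\rm un}|\Psi_i\rangle$ being large does not imply the SSAP property. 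Note also that your proposed repair---restricting the optimization to the genuine image of the parameter map---would reintroduce optimization over circuit parameters, i.e.\ the NP-hard QNN training problem~\cite{bittel2021training} that the framework is designed to bypass.

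The paper's route avoids the overshoot by never using the surrogate overlap as a certificate. Instead it optimizes the two-argument distinguishability loss $\mathcal{L}_R(\vec{\bm q},M(\vec{\bm\beta}))=\abs{\sum_i\vec{\bm q}_i{\rm Tr}\pbra{M(\vec{\bm\beta})(|\Psi_i\rangle\langle\Psi_i|-\rho_{\rm un})}}$, in which the same (possibly non-physical) observable hits both the QNN ensemble and $\rho_{\rm un}$, so any overshoot cancels in the difference. Completeness is then Lemma~\ref{lemma:lower}: if the maximum of $\mathcal{L}_R$ over $(\vec{\bm q},\vec{\bm\beta})$ is at most $\epsilon$, a carefully chosen distribution $\vec{\bm q}$ together with the intrinsic-connection property (applied to the optimal measurement of a reference QNN state $|\Psi_t\rangle$) forces ${\rm Tr}(M(\vec{\bm\beta}^{(t)})(\rho_{\rm un}-I_n/2^n))\ge 1-1/d-\mathcal{O}(\epsilon)$, i.e.\ $C^{{\rm lim},\mathcal{A}}_{\epsilon}(\rho_{\rm un})\le LR$. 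Soundness is Lemma~\ref{lemma:upper_new}: a large minimum of $\mathcal{L}_R$ over $\vec{\bm\beta}$ for some $\vec{\bm q}$ rules out approximation by any $U\in\mathcal{U}_{\mathcal{A}}(R)$. The remaining ingredients match your outline in spirit but differ in execution: the non-convex bilinear loss is maximized by Bayesian optimization with the regret bound of Theorem~\ref{Theorem:BO} (iteration count $T=N^2n^k$), classical shadows supply the estimates of $\abs{\langle\Psi_i|\Psi_j\rangle}^2$ and $\langle\Psi_i|\rho_{\rm un}|\Psi_i\rangle$, and the depth sweep is a binary search exploiting monotonicity of the predicate rather than a linear scan. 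You would need to adopt the distinguishability loss (or something playing the same two-sided role) before the rest of your accounting can go through.
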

The proof of Theorem~\ref{them:main} depends on evaluating the sample complexity of QCA states and unknown noisy state, as well as the related iteration complexity during training the quantum learning algorithm. We analyze the sample and classical computational complexity in the next two subsections as part of the proof outline for this theorem.
\subsection{Sample Complexity}
The sample complexity of QCA states is promised by Theorem~\ref{lemma:qnnproperty}, where the number of copies of QCA states  $N=LRn^2\log n/\epsilon^2$.  
To evaluate $\mathcal{L}_R(\vec{\bm\beta})$ for $1\leq R\leq\tilde{R}\leq\mathcal{O}(\log n)$ within $\epsilon$-additive error, the classical shadow representation of $\rho_{\rm un}$ is shared by all QCA states, by leveraging the classical shadow method~\citep{huang2020predicting}. As a result, the sample complexity of an unknown weakly noisy state is at most $m\leq\mathcal{O}\left(\frac{n^2L\tilde{R}\log n\log(1/\delta)}{\epsilon^4}\right)$, where the failure probability $\delta\in(0,1)$.

Meanwhile, we also give a sample complexity lower bound to demonstrate our learning algorithm is optimal in terms of the circuit depth $\tilde{R}$.

\begin{theorem}
Given an unknown weakly noisy quantum state $\rho$ prepared by a $R$-depth quantum circuit affected by $p$-strength local Pauli noise channels, then any algorithm designed to learn $\rho$ requires at least $m$ samplings in the worst-case scenario, where $m=\frac{(1-p)^{-2c\tilde{R}}(1-\delta)^2}{2n}$, $c=1/(2\ln 2)$ and constant $\delta\in\mathcal{O}(1)$.
\label{them:opt}
\end{theorem}
From the above result and the definition of the quantum weakly noisy state, we know that $p = \mathcal{O}(n^{-1})$ which results in the sample complexity lower bound $\Omega((1+2cp\tilde{R})(1-\delta)^2/(2n)).$ This implies the sample complexity lower bound may linearly growth with the increasing of the circuit layer $\tilde{R}$, and our learning algorithm is nearly-optimal in terms of the circuit depth $\tilde{R}$. We leave proof details to Appendix~\ref{Sec:samplelowerbound}.

\subsection{Running Time Analysis}
Here, we provide an analysis of the computational complexity for Alg. \ref{alg:QuantumLearning}, including the number of iterations in Alg.~\ref{alg:QuantumLearning} and the subroutine ${\rm BMaxS}$ (Alg.~\ref{alg:BO_max}) which is a Bayesian optimization algorithm. Denote the global optimum $\vec{\bm \beta}^{*}=\max_{\vec{\bm \beta}\in\mathcal{D}_{\vec{\bm\beta}}}\mathcal{L}_R(\vec{\bm \beta})$, and a natural performance metric for optimizing $\mathcal{P}(R)$ is the \textit{simple regret} $s_T=\mathcal{L}_R(\vec{\bm \beta}^{*})-\mathcal{L}_R(\vec{\bm\beta}^{(T)})$, which is the difference between the global maximum $\mathcal{L}_R(\vec{\bm \beta}^{*})$ and $\mathcal{L}_R(\vec{\bm\beta}^{(T)})$. Here, $\vec{\bm\beta}^{(T)}$ represents the updated parameter in the $T$-th step. Obviously, simple regret is non-negative and asymptotically decreases with the increasing iteration complexity
$T$. To build up an explicit connection between $s_T$ and $T$, the \textit{average regret} ${\rm avr}_T$ is introduced. Specifically, ${\rm avr}_T=1/T\sum_{t=1}^T \left[\mathcal{L}_R(\vec{\bm \beta}^{*})-\mathcal{L}_R(\vec{\bm\beta}^{(T)})\right]$.
Noting that the relationship $s_T\leq {\rm avr}_T$ holds for any $T\geq 1$. In the following, we show that ${\rm avr}_T$ is upper bounded by $\mathcal{O}(N\log T/\sqrt{T})$, and the simple regret $s_T\leq {\rm avr}_T\rightarrow 0$ with the increase of $T$. The following theorem  derives the average regret bounds for $\mathcal{P}(R)={\rm BMaxS}(\rho_{\rm un}, \hat{\rho}_{\rm QCA}(R,\mathcal{A},N),T,\epsilon)$. 

\begin{theorem}
Take the target weakly noisy state $\rho_{\rm un}$ and QCA state set $\hat{\rho}_{\rm QCA}(R,\mathcal{A},N)$ into the subroutine $\mathcal{P}(R)$ (Eq.~\ref{Eq:subroutine}). Pick the failure probability $\delta\in (0,1)$, then there exists a Bayesian approach (details refer to Alg.~\ref{alg:BO_max}) such that the average regret ${\rm avr}_T$ can be upper bounded by 
\begin{align}
{\rm avr}_T\leq\mathcal{O}\left(\sqrt{\frac{4N^2\log^2T+2N\log T\log(\pi^2/(6\delta))}{T}}\right)
\end{align}
after $T$ optimization steps with $1-\delta$ success probability, where $N$ represents the number of samples in the QCA state set.
\label{Theorem:BO}
\end{theorem}

Specifically, select an integer $k$ such that $k\log(n)<n^{k/2}\epsilon $ for all $n>n_0$, where $n_0$ represents a fixed integer. Then $T=N^2n^k$ enables the simple regret $s_T$ can be upper bounded by $\epsilon$, where $N=LRn^2\log n\epsilon^{-2}$ (Theorem~\ref{lemma:qnnproperty}). The proof details refer to Appendix~\ref{proofofBO}.  

Finally, noting that Alg.~\ref{alg:QuantumLearning} is essentially a binary search program on the interval $[1,\tilde{R}]$ by using the oracle $\mathcal{P}(R,\epsilon)={\rm BMaxS}(\rho_{\rm un},\hat{\rho}_{\rm QCA}(R,\mathcal{A},N) ,T,\epsilon)$. Therefore, Alg.~\ref{alg:QuantumLearning} requires $$\mathcal{O}\left(\frac{1}{\epsilon^4},n^{4+k},L^2,\tilde{R}^2\log(\tilde{R}),\log(1/\delta)\right)$$ 
classical time complexity to answer the SCP problem. 
This thus completes the proof of Theorem~\ref{them:main}.

\begin{figure}[t]
\centering
\includegraphics[width=\textwidth]{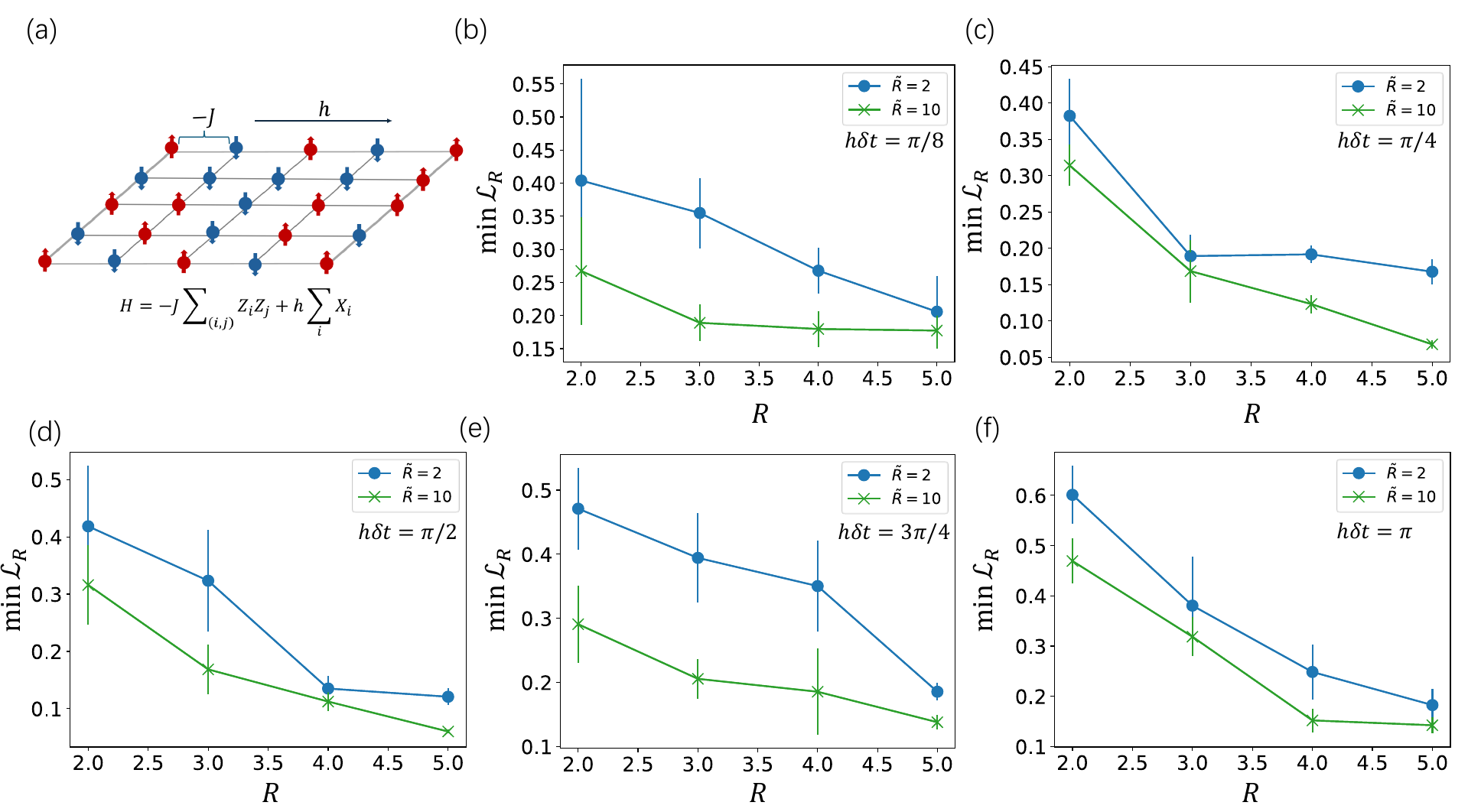}
  \caption{ 
  (a) Visualization of the 2D transverse field Ising model. (b)-(d) illustrate the trend of the function $\min_{\vec{\bm\beta}}\mathcal{L}_R$ as it varies with the circuit depth $R$ of QCA set. To extract the complexity lower bound of the target quantum state from the plot, we begin by drawing a horizontal line representing the approximation error of the quantum state. For example, in (b), we set $\epsilon^{\prime}=0.25$. Next, we identify the last point where the curve remains above this horizontal line before crossing it, and record the corresponding horizontal coordinate $R$. This $R$ represents the complexity lower bound of the target quantum state. For instance, in Fig. (b), the blue curve remains above $\epsilon^{\prime}=0.25$ until $R=4$, indicating a complexity lower bound of $C_{0.125}^{\rm lim,\mathcal{A}}(\rho_{\rm TI}(2,p))>4L$. Similarly, the green curve last remains above $\epsilon^{\prime}=0.25$ at $R=2$, yielding a complexity lower bound of $C_{0.125}^{\rm lim,\mathcal{A}}(\rho_{\rm TI}(10,p))>2L$, where $L$ represents the number of two-qubit gates in each layer of the architecture $\mathcal{A}$.}
  \label{fig:simulation_2}
\end{figure}
\section{Numerical Simulations}
\label{Sec:simulation}
Here, we demonstrate how to use the proposed learning method to benchmark the capabilities of noisy state computation, providing numerical evidence to support our theoretical findings. Specifically, we address the fundamental question: \emph{Does the complexity of weakly noisy quantum states grow linearly with circuit depth?}

We consider to simulate the time dynamics of the Hamiltonian $H=-J\sum_{\langle i,j\rangle}Z_iZ_j+h\sum_i X_i$ on a two-dimensional grid with $(a\times b)$ size, where $J>0$ represents the coupling of nearest-neighbour spins and $h$ represents the global transverse field strength (see Fig.~\ref{fig:simulation_2} (a)). To simulate the time evolution circuit $e^{-iH\tau}$, the first-order Trotter decomposition is utilized~\citep{Youngseok2023evidence}, that is $e^{-iH\tau}=\left[e^{-iH_{\rm ZZ}\delta t}e^{-iH_{\rm X}\delta t}\right]^{\tau/\delta t}+\mathcal{O}((\delta t)^2)$, where $H_{\rm ZZ}$ represents the spin term, $H_{\rm X}$ represents the transverse-field term and the evolution time $\tau$ is discretized into $(\tau/\delta t)$ time slices. Then its quantum circuit implementation can be decomposed by $e^{-iH_{\rm ZZ}\delta t}=\prod_{\langle i,j\rangle}R_{Z_iZ_j}(-2J\delta t)$ and $e^{-iH_{X}\delta t}=\prod_{i}R_{X_i}(2h\delta t)$. We focus on a small-scale scenario studied in Ref.~\citep{Youngseok2023evidence}, where the system size is $3\times 4$, angle rotations $-2J\delta t=-\pi/2$, $h\delta t\in\{\pi/8,\pi/4,\pi/2,3\pi/4,\pi\}$ and each quantum gate is affected by a local depolarizing channel $\mathcal{E}_i$ with strength $p=10^{-3}$. Let $\tilde{R}=\lceil \tau/\delta t \rceil$, then we denote the output quantum weakly noisy state as
\begin{align}
    \rho_{\rm TI}(\tilde{R},p)=\bigcirc_{r=1}^{\tilde{R}}\left[\mathcal{E}_p\circ\mathcal{U}_{ZZ} \circ \mathcal{E}_p\circ \mathcal{U}_{X}\right] (|0^n\rangle\langle0 ^n|),
\end{align}
where $\mathcal{U}_{ZZ}(\cdot)=e^{-iH_{ZZ}\delta t}(\cdot)e^{iH_{ZZ}\delta t}$, $\mathcal{U}_{X}(\cdot)=e^{-iH_{X}\delta t}(\cdot)e^{iH_{X}\delta t}$ and the quantum noise channel $\mathcal{E}_p=\otimes_{i=1}^n\mathcal{E}_i$. In the numerical simulation, we demonstrated our algorithm on a server with 64 vCPUs and 128 GiB of memory, where the density matrix $\rho_{\rm TI}(\tilde{R},p)$ and classical shadow set are prepared by the Pennylane package~\citep{bergholm2018pennylane}.

In our analysis, we mainly focus on the \emph{circuit complexity lower bound} of noisy quantum states $\rho_{\rm TI}(\tilde{R},p)$ with $\tilde{R}\in\{2, 10\}$ by using Alg.~\ref{alg:QuantumLearning}, considering a standard $2$D lattice quantum circuit architecture $\mathcal{A}$ to prepare $\hat{\rho}_{\rm QCA}$, where each layer has $L=\mathcal{O}(n)$ random two-qubit gates. Visualization of the architecture $\mathcal{A}$ in 1D scenario is given by Fig.~\ref{fig:intro_states_relation}~(b). To estimate the lower bound, we thus set a small error ($\epsilon=10^{-2}$)\footnote{Here, the numerical simulation mainly aims to find the circuit lower bound. Therefore, we set a small error threshold $\epsilon$ such that the measurement operators constructed in Alg~1 cannot lead to $\mathcal{P}(R)={\rm true}$. In this case, the algorithm will proceed to the final step to determine the lower bound of the quantum state's complexity.} and randomly generate $N=n^2R$ quantum circuits with varying circuit depths $R\in\{2,3,4,5\}$ based on the architecture $\mathcal{A}$. Precisely, we tune the linear coefficient $\vec{\bm\beta}$ to minimize the metric functions outlined in Lemma~\ref{lemma:upper_new}, as depicted in Figure~\ref{fig:simulation_2}, where each point represents the mean value of $\min_{\vec{\bm\beta}}\mathcal{L}_R$ by repeating $10$ independent experiments, and the error bar represents the standard variance. This strategic adjustment allows us to derive an estimate for the quantum circuit lower bound. More specifically, in Figure~\ref{fig:simulation_2} (b), we showcase a clear circuit complexity separation between weakly noisy quantum states $\rho_{\rm TI}(2,p)$ and $\rho_{\rm TI}(10,p)$. To see this relationship, Figure~\ref{fig:simulation_2} (b) demonstrated that $\min_{\vec{\bm\beta}}\mathcal{L}_2(\rho_{\rm TI}(10,p))>0.25$, meanwhile $\min_{\vec{\bm\beta}}\mathcal{L}_4(\rho_{\rm TI}(2,p))>0.25$ which demonstrate that the circuit complexity $C_{0.125}^{\rm lim,\mathcal{A}}(\rho_{\rm TI}(\tilde{R}=10,p))>2L$ and $C_{0.125}^{\rm lim,\mathcal{A}}(\rho_{\rm TI}(\tilde{R}=2,p))>4L$ (according to Lemma~\ref{lemma:upper_new}), where $L=3n$ represents the number of random two qubit gates in each layer. \emph{This result highlights weakly noisy quantum state complexity lower bound may not grow with the circuit depth, which is dramatically different to that of pure states.} Similar phenomenons are witnessed in subfigures.~\ref{fig:simulation_2} (c)-(f), where a shallower circuit depth weakly noisy states possess higher state complexity lower bound.

\section{Conclusion}
The quantum state complexity serves as a measure of inherent properties within quantum states, thereby facilitating a deeper understanding of quantum entanglement information, quantum topological phases, and computational capabilities. In practical applications, collected quantum states are often subject to noise originating from state preparation and quantum measurement (SPAM), as well as limitations imposed by the quantum hardware. Consequently, original pure states are transformed to noisy states through quantum channels. Thus, investigating the quantum state complexity of noisy states holds significant importance in studying information scrambling, the spread of local noise and entanglement throughout the entire system, which is expected to illuminate studies in the field of black-hole theory and condensed-matter physics. In this paper, we investigate the complexity of weakly noisy quantum states through a quantum learning algorithm, which connects two significant concepts in the quantum computational theory. The proposed quantum learning algorithm exploits the intrinsic structure of QCA to build a learning model $\mathcal{L}_R(\bm\beta)$, whose extreme points reveal the limited-structured complexity. Meanwhile, when considering the sample complexity of target noisy quantum state, our algorithm achieves optimal in terms of the circuit depth $\tilde{R}$. Moreover, we emphasize that the Bayesian optimizer (given by Alg.~\ref{alg:BO_max}) is not the unique option, and other optimization algorithms may also work with a similar iteration steps. This highlights the universality of the intrinsic structure of QCA in combination with optimization subroutines. 

\clearpage

\subsubsection*{Acknowledgments}
We thank Jens Eisert, Philippe Faist, Haihan Wu, Tavis Bennett, John Tanner, and all anonymous reviewers for their helpful comments on this paper. Y. Wu acknowledges the support from the Natural Science Foundation of China Grant (No.~62371050). B. Wu acknowledges the support from the National Natural Science Foundation of China Grant (No.~12405014) and the Research fund of the post-doctor who came to Shenzhen. X. Yuan is supported by the Innovation Program for Quantum Science and Technology Grant (No.~2023ZD0300200), the National Natural Science Foundation of China Grant (No.~12175003 and No.~12361161602), NSAF Grant (No.~U2330201). J. B. Wang acknowledges continued support from Pawsey Supercomputing Research Centre.

\bibliography{iclr2025_conference.bbl}
\bibliographystyle{iclr2025_conference.bbl}

\clearpage
\appendix
\section{Comparison with related works}
In the context of condensed matter physics, a topological phase transition occurs when the ground states of a family of Hamiltonians exhibit different circuit complexities. This suggests that topological phase classification may help distinguish between low-complexity and high-complexity ground states. The use of learning algorithms to classify quantum phases of matter has been widely studied. Proposals include quantum neural networks~\citep{cong2019quantum}, classical neural networks~\citep{beach2018machine, carrasquilla2017machine, greplova2020unsupervised, schindler2017probing, van2017learning}, and other classical machine learning models~\citep{huang2021provably, rodriguez2019identifying, wetzel2017unsupervised}. However, most previous works lack rigorous theoretical guarantees. Among these studies, Huang et al.~\citep{huang2021provably} utilized shadow tomography of the concerned ground states to design an unsupervised machine learning approach guaranteed to classify accurately under certain conditions. The proposed ``shadow kernel" can generate quantum entropies by tuning hyperparameters, enabling the approximation of various topological phase order parameters. Compared with Ref.~\cite{huang2021provably}, our learning algorithm can be applied to noisy states and predict complexity, whereas Ref.~\cite{huang2021provably} mainly focuses on pure ground state classification without providing an explicit metric for characterizing complexity.

On the other hand, learning quantum states and circuits is a long-standing task in the field of quantum machine learning. Previous works generally utilized parameterized circuits to learn approximations of target states and circuits~\citep{mitarai2018quantum, yang2020revisiting, huang2021power, jerbi2021quantum, cerezo2022challenges}. However, variational-based methods generally lack theoretical guarantees and may suffer from the barren plateau phenomenon even in low-depth parameterized circuits~\citep{mcclean2018barren, cerezo2021cost, anschuetz2022quantum}. Very recently, classical shadow-based learning algorithms have been proposed for reconstructing shallow quantum circuits  $U$~\citep{huang2024learning, landau2024learning}. Nevertheless, these methods require querying $U^{\dagger}$, which introduces intrinsic limitations in noisy environments since most noisy channels (CPTP maps) may not be invertible. Compared with related works, our method bypasses these limitations by introducing the "Intrinsic Structure property" (as given by Theorem~1), which enables an efficient quantum learning algorithm to predict state complexity in noisy environments.

we would like to point out that Ref.~\cite{schuster2024random} does not rule out the possibility that the pure state complexity can be predicted when the target states exhibit certain physical structures, such as the ground state of XXZ models and the toric code, as studied in Ref.~\cite{huang2021provably}. Furthermore, recent works~\citep{huang2024learning, landau2024learning} have also demonstrated that fixed shallow quantum circuits and quantum pure states (including $1$D-$\mathcal{O}(\log n)$-depth and all-to-all $\mathcal{O}(\log\log n)$-depth) can be efficiently learned. These results demonstrate significant differences between scenarios with and without 'randomness'. Compared with Ref.~\citep{schuster2024random}, our work is more similar to Refs.~\citep{huang2021provably, huang2024learning, landau2024learning}, which focus on a specific quantum state rather than an ensemble.

Furthermore, we would like to highlight the differences between weakly noisy quantum states and pseudorandom quantum states. Specifically, we consider the second-moment statistic property of a set of pseudorandom quantum circuits $U$ whose circuit depth $\tilde{R}\leq {\rm poly}\log n$. In Ref.~\citep{schuster2024random}, they claimed that even $\mathcal{O}(\log(n))$-depth Clifford circuits may approximate the unitary 
2-design property given by Haar measure. Then, we may suppose $U=U_1\cdots U_{\tilde{R}}$ representing a random Clifford circuit, then for the initial state $|0^n\rangle\langle0^n|$ and arbitrary observable $O$, we have
\begin{align}
    M_2(U)=\int_{U\sim{\rm Cl}(2^n)}{\rm Tr}\left[U|0^n\rangle\langle0^n|U^{\dagger
    }O\right]{\rm Tr}\left[U|0^n\rangle\langle0^n|U^{\dagger
    }O\right]=\frac{1}{d(d^2-1)}\left[{\rm Tr}^2[O]+{\rm Tr}[O^2]\right].
\end{align}
On the other hand, in the context of weakly noisy environment, the Clifford circuit $U$ may transform to the channel representation $\mathcal{U}=\bigcirc_{r=1}^{\tilde R}\mathcal{E}\circ \mathcal{U}_r$ where $\mathcal{E}$ represents a $n$-qubit Pauli channel and $\mathcal{U}_r=U_r(\cdot)U_r^{\dagger}$ represents a layer of Clifford gate. According to the ``Channel Pushing'' lemma given by Ref~\cite{quek2024exponentially} (Lemma~10), we may rewrite the noisy channel by $\mathcal{U}=\mathcal{E}^{\prime}\circ\bigcirc_{r=1}^{\tilde R}\mathcal{U}_r$, where $\mathcal{E}^{\prime}$ also represents an $n$-qubit Pauli channel. Let the channel $\mathcal{E}^{\prime}=\sum_lK_l(\cdot)K_l^{\dagger}$, where $K_l$ represents the Kraus operator satisfying $\sum_lK_l^{\dagger}K_l=I$, we have
\begin{eqnarray}
\begin{split}
     M_2(\mathcal{U})=&\int_{U\sim{\rm Cl}(2^n)}{\rm Tr}\left[\mathcal{U}(|0^n\rangle\langle0^n|)O\right]{\rm Tr}\left[\mathcal{U}(|0^n\rangle\langle0^n|)O\right]\\
     =&\sum\limits_{l_1,l_2}\int_{U\sim{\rm Cl}(2^n)}{\rm Tr}\left[K_{l_1}U(|0^n\rangle\langle0^n|)U^{\dagger}K_{l_1}^{\dagger}O\right]{\rm Tr}\left[K_{l_2}U(|0^n\rangle\langle0^n|)U^{\dagger}K_{l_2}^{\dagger}O\right]\\
     =&\sum\limits_{l_1,l_2}\left(\frac{1}{d(d^2-1)}\left({\rm Tr}(K_{l_1}OK_{l_1}^{\dagger}){\rm Tr}(K_{l_2}OK_{l_2}^{\dagger})+{\rm Tr}(K_{l_1}OK_{l_1}^{\dagger}K_{l_2}OK_{l_2}^{\dagger})\right)\right).
\end{split}
\end{eqnarray}
We note that the interaction term $\sum_{l_1,l_2}{\rm Tr}(K_{l_1}OK_{l_1}^{\dagger}K_{l_2}OK_{l_2}^{\dagger})$ may not equal to ${\rm Tr}(O^2)$, and this leads to $M_2(U)\neq  M_2(\mathcal{U})$. This comparison indicates that \emph{even if an algorithm can learn the difference between noisy ensembles and Haar-random states, this does not imply that the algorithm can be used to distinguish between pseudorandom ensembles and Haar-random ensembles.}

The above discussions demonstrate the fundamental difference between our work and Ref.~\citep{schuster2024random}. Such difference implies that the efficiency of our learning algorithm may not contradict the findings presented in Ref.~\citep{schuster2024random}.

\section{Noise Models Assumption in this work}
In the proposed learning approach (Alg~1), the algorithm requires (i) the target weakly noisy quantum state $\rho_{\rm un}$, and (ii) the QCA state set $\hat{\rho}_{\rm QCA}$. The target weakly noisy state naturally contains noise signals, while the QCA state set $\hat{\rho}_{\rm QCA}$ is provided by the classical shadow representation~\citep{huang2020predicting}. When preparing the classical shadow, recent result successfully embedded the quantum error mitigation method into the classical shadow protocol, which provides rigorous theoretical guarantees even in the noisy environment~\citep{jnane2024quantum}.

In quantum computing research, the gate-independent noise model assumes that the noise affecting quantum operations is uniform across all gates, regardless of their type or implementation. This simplification is widely adopted for several reasons:

\begin{itemize}
    \item[]\textbf{Theoretical Simplification:} Assuming gate-independent noise allows researchers to develop and analyze error correction protocols and fault-tolerant methods without delving into the complexities introduced by gate-specific noise characteristics. This uniformity facilitates the derivation of general results and theoretical bounds~\citep{nielsen2001quantum}.
\item []\textbf{Practical Approximations:} In certain quantum systems, particularly those with well-calibrated gates operating on the same number of qubits and utilizing uniform control mechanisms, noise variations across different gates can be negligible \citep{shor1996fault,arute2019quantum}. In such cases, the gate-independent noise model serves as a reasonable approximation, streamlining analysis without significantly compromising accuracy.
\item[]\textbf{Alignment with Twirled Noise Models:}
Techniques like Pauli twirling are employed to transform complex noise channels into diagonal forms on the Pauli basis~\citep{chen2023learnability}. While twirling simplifies the noise structure, it does not inherently eliminate gate dependence. However, in many scenarios, the resulting noise can be approximated as gate-independent, aligning with the assumptions of the model.
\end{itemize}
The gate-independent noise model provides a foundational framework for understanding error propagation and developing correction strategies, which is a useful abstraction for theoretical exploration and the initial development of error correction methods. We leave an open question of how to depict the gate-dependent noise, which usually happens in larger, more complex quantum architectures.

\section{Quantum Hardware Requirement}
We note that our learning algorithm has very few limitations to the practical quantum hardware. In the Alg.~1, we only require (i) the target weakly noisy quantum state $\rho_{\rm un}$, and (ii) the QCA state set $\hat{\rho}_{\rm QCA}$, which is essentially the classical shadow representation~\cite{huang2020predicting}. When preparing the classical shadow, recent work has successfully integrated quantum error mitigation methods into the classical shadow protocol, providing rigorous theoretical guarantees in preparing shadows even in noisy environments~\cite{jnane2024quantum}. As a result, our Alg~1 only require a quantum computer which supports the quantum error mitigation function, which should have $50-100$ noisy qubits with $\geq 0.99$ quantum gate fidelity. Some popular quantum computation platforms may satisfy these requirements and can be used to prepare the required classical shadow, such as Eagle used in Ref.~\cite{Youngseok2023evidence} and Sycamore~\cite{arute2019quantum}. This work is essentially a \emph{theoretical research}, and the algorithm has not yet been explicitly tested on these practical platforms.

\section{Comparison between local and global noise models}
 We note that the global depolarizing channel is essentially a special case of contracting $n$ local noise models. Specifically, given the global depolarizing channel $\mathcal{E}$ with noisy strength $p_g$ and local depolarizing channel $\otimes_j\mathcal{E}_j$ with strength $p_l$, according to the relative entropy inequality, for any input state $\rho$, we can observe the relationships
\begin{align}
    D(\mathcal{E}(\rho)\|I_n/2^n)\leq (1-p_g)D(\rho\|I_n/2^n),
\end{align}
and
\begin{align}
    D(\otimes_{j=1}^n\mathcal{E}_j(\rho)\|I_n/2^n)\leq (1-p_l)^nD(\rho\|I_n/2^n)
\end{align}
hold. These inequalities indicate that the global depolarizing channel is much weaker than local noise, given the same noise strength. Consequently, for the global depolarizing channel, our algorithm is capable of handling scenarios where $p_g\leq\mathcal{O}(1)$, while maintaining comparable prediction accuracy. This suggests that the algorithm can effectively handle both types of noise, with only a minor difference in performance under different noise strengths.

\section{Related Definitions}
\begin{definition}[Architecture~\cite{haferkamp2022linear,bouland2019complexity}]
An architecture $\mathcal{A}$ is a directed acyclic graph that contains $|\mathcal{V}|\in\mathbb{Z}_{>0}$ vertices (gates), and two edges (qubits) enter each vertex, and two edges exit. A quantum circuit induced by the architecture $\mathcal{A}$ is denoted as $U_{\mathcal{A}}$. The circuit set which contains all quantum circuits induced by the architecture $\mathcal{A}$ is denoted by $\mathcal{U}_{\mathcal{A}}$. The circuit set $\mathcal{U}_{\mathcal{A}}(R)$ contains all $R$-depth quantum circuits induced by the architecture $\mathcal{A}$, and we have the relationship
\begin{align}
    \mathcal{U}_{\mathcal{A}}=\cup_{R\geq 1}\mathcal{U}_{\mathcal{A}}(R).
\end{align}
\label{def:arch}
\end{definition}

\begin{definition}[Causal Slice] The circuit $U_{\mathcal{A}}$ is a causal slice if there exists a qubit-reachable path between any two qubit-pairs, where the path only passes through vertices (gates) in the architecture $\Acal$.
\end{definition}

\section{Proof of Fact~\ref{them:neccseary_condition}}
\label{proof:neccseary_condition}

Here, we are interested in measuring the quantity $\mathbb{E}_{\mathcal{U}_1,...,\mathcal{U}_{\tilde{R}}}\left[{\rm Tr}(\rho_{1,\tilde{R}}\rho_{2,\tilde{R}})\right]$, where $\rho_{1,\tilde{R}}=\bigcirc_{r=1}^{\tilde{R}}\mathcal{E}\circ \mathcal{U}_r(\rho_1)$, $\rho_{2,\tilde{R}}=\bigcirc_{r=1}^{\tilde{R}}\mathcal{U}_r(\rho_2)$ and $\rho_1$, $\rho_2$ represent initial pure states. Suppose $\mathcal{E}$ can be decomposed by Kraus operators, that is $\mathcal{E}(\cdot)=\sum_{l=1}^rK_l(\cdot)K_l^{\dagger}$. Then we first consider the scenario $\tilde{R}=1$:
\begin{eqnarray}
    \begin{split}
        {\rm Tr}\left(\rho_{1,1}\rho_{2,1}\right)=&{\rm Tr}\left[\left(\mathcal{E}\circ \mathcal{U}_1(\rho_1)\mathcal{U}_1(\rho_2)\right)\right]\\
        =&{\rm Tr}\left[{\rm Swap}\left(\mathcal{E}\circ \mathcal{U}_1(\rho_1)\otimes\mathcal{U}_1(\rho_2)\right)\right]\\
        =&\sum\limits_{l=1}^r{\rm Tr}\left[{\rm Swap}\left(K_l\mathcal{U}_1(\rho_1)K_l^{\dagger}\otimes\mathcal{U}_1(\rho_2)\right)\right].
    \end{split}
\end{eqnarray}
Then taking the average value on a unitary $2$ design set, we obtain
\begin{align}
\mathbb{E}_{\mathcal{U}_1}{\rm Tr}\left(\rho_{1,1}\rho_{2,1}\right)=\sum\limits_{l=1}^r\mathbb{E}_{\mathcal{U}_1}{\rm Tr}\left[{\rm Swap}\left((K_l\otimes I_n)(U_1\otimes U_1)(\rho_1\otimes \rho_2)(U_1^{\dagger}\otimes U_1^{\dagger})(K_l^{\dagger}\otimes I_n)\right)\right].
\label{Eq:E2}
\end{align}
Considering the relationship

\begin{eqnarray}
    \begin{split}
    \mathbb{E}_{\mathcal{U}\sim \Ubb}\left[(U\otimes U)A(U^{\dagger}\otimes U^{\dagger})\right]=&\left(\frac{{\rm Tr}(A)}{d^2-1}-\frac{{\rm Tr}({\rm Swap}A)}{d(d^2-1)}\right)I_n\otimes I_n+\left(\frac{{\rm Tr}({\rm Swap}A)}{d^2-1}-\frac{{\rm Tr}(A)}{d(d^2-1)}\right){\rm Swap}\\
    =&\alpha I_n\otimes I_n+\beta{\rm Swap}
 \end{split}
\end{eqnarray}
where $\Ubb$ is unitary $2$-design, then Eq.~\ref{Eq:E2} can be further calculated by
\begin{eqnarray}
\begin{split}
\mathbb{E}_{\mathcal{U}_1}{\rm Tr}\left(\rho_{1,1}\rho_{2,1}\right)=&\sum\limits_{l=1}^r\mathbb{E}_{\mathcal{U}_1}{\rm Tr}\left[{\rm Swap}\left((K_l\otimes I_n)(\alpha I_n\otimes I_n+\beta{\rm Swap})(K_l^{\dagger}\otimes I_n)\right)\right]\\
=&\alpha \sum\limits_{l=1}^r{\rm Tr}\left[{\rm Swap}(K_lK^{\dagger}_l\otimes I_n)\right]+\beta\sum\limits_{l=1}^r{\rm Tr}\left[{\rm Swap}(K_l\otimes I_n){\rm Swap}(K_l^{\dagger}\otimes I_n)\right]\\
=&\alpha \sum\limits_{l=1}^r{\rm Tr}\left[K_lK^{\dagger}_l\right]+\beta\sum\limits_{l=1}^r\left|{\rm Tr}\left[K_l\right]\right|^2.
 \end{split}
\end{eqnarray}
The last equality comes from
\begin{eqnarray}
\begin{split}
{\rm Tr}\left[{\rm Swap}(K_l\otimes I_n){\rm Swap}(K_l^{\dagger}\otimes I_n)\right]&={\rm Tr}\left[{\rm Swap}(K_l\otimes I_n)(I_n\otimes K_l^{\dagger})\right]\\
&={\rm Tr}\left[{\rm Swap}(K_l\otimes K_l^{\dagger})\right]\\
&=\left|{\rm Tr}\left[K_l\right]\right|^2.
\end{split}
\end{eqnarray}
Consider $\sum_{l=1}^r{\rm Tr}\left[K_lK^{\dagger}_l\right]={\rm Tr}[\mathcal{E}(I_n)]=d$ and denote $F=\sum\limits_{l=1}^r\left|{\rm Tr}\left[K_l\right]\right|^2$,
Eq.~\ref{Eq:E2} can be finally expressed as 
\begin{eqnarray}
\begin{split}
\mathbb{E}_{\mathcal{U}_1}{\rm Tr}\left(\rho_{1,1}\rho_{2,1}\right)
&=\left(\frac{1}{d^2-1}-\frac{{\rm Tr}(\rho_1\rho_2)}{d(d^2-1)}\right){\rm Tr}[\mathcal{E}(I_n)]+\left(\frac{{\rm Tr}(\rho_1\rho_2)}{d^2-1}-\frac{1}{d(d^2-1)}\right)F\\
&=\frac{F-1}{d^2-1}{\rm Tr}(\rho_1\rho_2)+\frac{1}{d^2-1}\left(d-F/d\right),
\label{Eq:E6}
\end{split}
\end{eqnarray}
where $d=2^n$. Therefore, we obtain a recursive formula for the overlap of the output states, as defined in~\ref{Eq:E6}. We have
\begin{eqnarray}
\begin{split}
\mathbb{E}_{\mathcal{U}_1,\mathcal{U}_1,...,\mathcal{U}_{\tilde{R}}}{\rm Tr}\left(\rho_{1,\tilde{R}}\rho_{2,\tilde{R}}\right)=\frac{F-1}{d^2-1}\mathbb{E}_{\mathcal{U}_1,\mathcal{U}_1,...,\mathcal{U}_{\tilde{R}-1}}{\rm Tr}(\rho_{1,\tilde{R}-1}\rho_{2,\tilde{R}-1})+\frac{1}{d^2-1}\left(d-F/d\right).
\end{split}
\end{eqnarray}
Then, we can use this iteration relationship to construct a geometric sequence, that is
\begin{eqnarray}
\begin{split}
\mathbb{E}_{\mathcal{U}_1,\mathcal{U}_1,...,\mathcal{U}_{\tilde{R}}}{\rm Tr}\left(\rho_{1,\tilde{R}}\rho_{2,\tilde{R}}\right)-\frac{1}{d}=&\left(\frac{F-1}{d^2-1}\right)\left(\mathbb{E}_{\mathcal{U}_1,\mathcal{U}_1,...,\mathcal{U}_{\tilde{R}-1}}{\rm Tr}(\rho_{1,\tilde{R}-1}\rho_{2,\tilde{R}-1})-\frac{1}{d}\right)\\
=&\left(\frac{F-1}{d^2-1}\right)^{\tilde{R}-1}\left(\mathbb{E}_{\mathcal{U}_1}{\rm Tr}\left(\rho_{1,1}\rho_{2,1}\right)-\frac{1}{d}\right)\\
=&\left(\frac{F-1}{d^2-1}\right)^{\tilde{R}-1}\left(\frac{F-1}{d^2-1}+\frac{d^2-F}{d(d^2-1)}-\frac{1}{d}\right)\\
=&\left(\frac{F-1}{d^2-1}\right)^{\tilde{R}-1}\frac{F-1}{d(d+1)},
\end{split}
\end{eqnarray}
where the last equality comes from initial states $\rho_1=\rho_2=|0^n\rangle\langle0^n|$. Generally, $F\leq d^2$, then if 
\begin{align}
\tilde{R}\leq\frac{\log\left(\frac{F-1}{d(d+1)}(\eta-1/d)^{-1}\right)}{\log(d^2-1)-\log(F-1)},
\end{align}
we have $\mathbb{E}_{\mathcal{U}_1,\mathcal{U}_1,...,\mathcal{U}_{\tilde{R}}}{\rm Tr}\left(\rho_{1,\tilde{R}}\rho_{2,\tilde{R}}\right)\geq\eta$.

\section{Proof of Theorem~\ref{lemma:qnnproperty}}
\label{Proof:theorem1}
Consider $U(\bm\alpha)=\prod_{r=1}^{LR}U(\bm\alpha_r)$ is composed of $LR$ two-qubit gates
\begin{align}
    U(\bm\alpha_r)=\exp\left(-i\sum\limits_{j_1,j_2=0}^4\alpha_r(j_1,j_2)\left(P_{j_1}\otimes P_{j_2}\right)\right)=\exp\left(-i\langle\bm\alpha_{r},\bm P_r\rangle\right),
\end{align}
where $P_{j}\in\{I,X,Y,Z\}$ and each $\alpha_r(j_1,j_2)\in[-1,1]$~\footnote{Without loss of generality, we assume $\alpha_r(j_1,j_2)\in[-1,1]$. For rotation parameters $\left|\alpha_r(j_1,j_2)\right|\in[1,2\pi]$, we can repeat the related two-qubit gate constant times to keep all rotation angles fixing in the interval $[-1,1]$.}. 
Using Taylor series, one obtains
\begin{align}
U(\bm\alpha)=\prod\limits_{r=1}^{LR}\sum\limits_{k=0}^{\infty}\frac{(-i\langle\bm\alpha_{r},\bm P_r\rangle)^k}{k!}.
\end{align}
Denote 
\begin{align}
    U(\bm\alpha_r)_{\rm tr}=\sum\limits_{k=0}^{K}\frac{(-i\langle\bm\alpha_{r},\bm P_r\rangle)^k}{k!},
\end{align}
therefore $U(\bm\alpha_r)-U(\bm\alpha_r)_{\rm tr}=\sum_{k=K+1}^{\infty}\frac{(-i\langle\bm\alpha_{r},\bm P_r\rangle)^k}{k!}$. For arbitrary bit-string $x,y$, we can apply standard bound on Taylor series to bound 
\begin{align}
    \|\langle x|(U(\bm\alpha_r)-U(\bm\alpha_r)_{\rm tr})|y\rangle\|_1\leq\kappa/K!
    \label{Eq:taylor}
\end{align} 
for some constant $\kappa$.
Therefore we have
\begin{eqnarray}
\begin{split}
    &\langle0^n|U^{\dagger}(\bm\alpha)\rho U(\bm\alpha)|0^n\rangle=\sum\limits_{i,j=0}^{2^n-1}\rho_{ij}\langle0^n|U^{\dagger}(\bm\alpha)|i\rangle\langle j|U(\bm\alpha)|0^n\rangle\\
    &=\sum\limits_{i,j=0}^{2^n-1}\rho_{ij}\left(\sum\limits_{\substack{y_1,y_2,...y_{LR-1}\in\{0,1\}^n\\y_{LR}=i}}\prod\limits_{r=1}^{LR}\langle0^n|U(\bm\alpha_r)|y_r\rangle\right)\left(\sum\limits_{\substack{y_1,y_2,...y_{LR-1}\in\{0,1\}^n\\y_{LR}=j}}\prod\limits_{r=1}^{LR}\langle y_r|U(\bm\alpha_r)|0^n\rangle\right)\\
    &=\sum\limits_{i,j=0}^{2^n-1}\rho_{ij}\left(\sum\limits_{\substack{y_1,y_2,...y_{LR-1}\in\{0,1\}^n\\y_{LR}=i}}\prod\limits_{r=1}^{LR}\langle0^n|\sum\limits_{k=0}^{\infty}\frac{(-i\langle\bm\alpha_{r},\bm P_r\rangle)^k}{k!}|y_r\rangle\right)\cdot\\
    &\left(\sum\limits_{\substack{y_1,y_2,...y_{LR-1}\in\{0,1\}^n\\y_{LR}=j}}\prod\limits_{r=1}^{LR}\langle y_r|\sum\limits_{k=0}^{\infty}\frac{(-i\langle\bm\alpha_{r},\bm P_r\rangle)^k}{k!}|0^n\rangle\right),
\end{split}
\end{eqnarray}
where the $y_1,y_2,...$ represent the Feymann integration path. According to inequality~\ref{Eq:taylor}, $\langle y_r|U(\bm\alpha_r)|0^n\rangle$ can be approximated by a polynomial of degree $K$ based on Taylor truncated method, the above expression can be rewritten by
\begin{eqnarray}
    \sum\limits_{r,s=0}^{2^n-1}\rho_{rs}\left(f_r(\bm\alpha_1,...\bm\alpha_{LR})+\mathcal{O}\left(\frac{2^{LRn}}{(K!)^{LR}}\right)\right)\left(f_s(\bm\alpha_1,...\bm\alpha_{LR})+\mathcal{O}\left(\frac{2^{LRn}}{(K!)^{LR}}\right)\right),
    \label{Eq:mean}
\end{eqnarray}
where $f_r(\bm\alpha_1,...\bm\alpha_{LR})$ represents a multi-variable polynomial of degree $LRK$. 

Furthermore, we will show that Eq.~\ref{Eq:mean} can be approximated by a low-degree function with at most $\binom{LR}{q}(K)^q$ terms, where $q=\mathcal{O}(1)$. To show this fact, we rewrite Eq.~\ref{Eq:mean} by
\begin{align}
    \sum\limits_{r,s=0}^{2^n-1}\rho_{rs}f_r(\bm\alpha_1,...\bm\alpha_{LR})f_s(\bm\alpha_1,...\bm\alpha_{LR}).
\end{align}
For the convenience of the proof, we denote $f_r(\bm\alpha_1,...\bm\alpha_{LR})=f_r$. Let
$$f_rf_s=4^{LRn}\sum_{\vec{\bm i}}\sum_{j_1,j_2}\bm a_{\vec{\bm i}}\alpha_1^{i_1}(j_1,j_2)\cdots \alpha_{LR}^{i_{LR}}(j_1,j_2),$$ 
where the factor $4^{LRn}$ aims to `normalize' each term in the summation, $\vec{\bm i}=(i_1,...,i_{LR})$, $\vec{\bm \alpha} = (\alpha_{1}(j_1,j_2),\ldots, \alpha_{LR}(j_1,j_2))$, $j_1,j_2\in[15]$ and each $0\leq i_l\leq 2K$, $l\in[LR]$. For the convenience of the proof, we omit the index $(j_1,j_2)$ in $f_rf_s$ in the following procedure.

Given a constant value $q\leq\mathcal{O}(1)$, for every term like $\bm a_{\vec{\bm i}}\alpha_1^{i_1}\cdots \alpha_v^{i_v}$ with $i_1=\cdots=i_v\geq K$ and $v>q$, its corresponding parameter $\left|\bm a_{\vec{\bm i}}\right|\leq 1/(K!)^{q}$ (based on Taylor series). Truncate above high-degree terms, and denote
\begin{align}
    \tilde{f}_{r,s}=\sum_{\substack{j_1,...,j_q\leq K-1\\ s_1\leq  s_2...\leq s_q\leq LR}}\bm a_{\vec{\bm j},\vec{\bm s}}\alpha_{s_1}^{j_1}\cdots \alpha_{s_q}^{j_q},
\end{align}
therefore, the relationship
\begin{align}
\left|f_rf_s-\tilde{f}_{r,s}\right|=4^{LRn}\left|\sum_{\vec{\bm i}}\bm a_{\vec{\bm i}}\alpha_1^{i_1}\cdots \alpha_{LR}^{i_{LR}}-\sum_{\substack{j_1,...,j_q\leq K-1\\ s_1\leq  s_2...\leq s_q\leq LR}}\bm a_{\vec{\bm j},\vec{\bm s}}\alpha_{s_1}^{j_1}\cdots \alpha_{s_q}^{j_q}\right|\leq 4^{LRn}\left(\frac{K^{LR}}{(K!)^q}\right)
\end{align}
holds. Above inequality comes from the fact that there are $\mathcal{O}(K^{LR})$ terms in $f_r$ where the norm of each truncated term is upper bounded by $\leq \mathcal{O}\left(1/(K!)^{q}\right)$.

Then let $q=\mathcal{O}(1)$, $LR=\mathcal{O}(n\log(n))$, $\tilde{f}_{r,s}$ can provide an estimation to $f_rf_s$ within $2^{-{\rm poly}(n)}$ additive error according to the Stirling's formula. Specifically, let $q=1$ and use Stirling's formula, the error
\begin{eqnarray}
\begin{split}
4^{LRn}\left(\frac{K^{LR}}{(K!)^q}\right)=\frac{4^{n^2\log n}K^{n\log n}}{K!}\approx \frac{4^{n^2\log n} K^{n\log n}}{\sqrt{2\pi K}(K/e)^K}.
\end{split}
\label{Eq:E17}
\end{eqnarray}
Let $K=n^2\log n$, one obtains
\begin{eqnarray}
    \begin{split}
        \frac{4^{n^2\log n} (n^2\log n)^{n\log n}}{\sqrt{2\pi n^2\log n}(n^2\log n/e)^{n^2\log n}}
&\leq\frac{(4^{\log n})^{n^2\log n}(n^2\log n)^{n\log n}}{(n^2\log n/e)^{n^2\log n}\sqrt{2\pi n^2\log n}}\\
&=\frac{(n^2\log n)^{n\log n}}{\sqrt{2\pi n^2\log n}}\left(\frac{e}{\log n}\right)^{n^2\log n}\\
&=\frac{1}{\sqrt{2\pi n^2\log n}}\left(\frac{(e^nn^2\log n)}{(\log n)^n}\right)^{n\log n}\\
&\leq \frac{1}{\sqrt{2\pi n^2}}2^{-n^2},
    \end{split}
\end{eqnarray}
where the last inequality holds for large $n$. Then Eq.~\ref{Eq:mean} can be represented by a muti-variable polynomial function $f(\vec{\bm\alpha},\rho)$ with $LR$ variables and at most ${\rm poly}(n)$ terms, and the relationship
\begin{align}
    \left|\langle0^n|U^{\dagger}(\vec{\bm\alpha})\rho U(\vec{\bm\alpha})|0^n\rangle-f(\vec{\bm\alpha},\rho)\right|\leq 2^{-n^2}
\label{Eq:fit}
\end{align}
holds. 
Suppose the target observable $M=U(\bm\alpha^{*})|0^n\rangle\langle0^n|U^{\dagger}(\bm\alpha^{*})$. Then we may write
\begin{align}
{\rm Tr}(M\rho)\approx f(\vec{\bm\alpha}^{*},\rho)=\sum_{\substack{j_1,...,j_q\leq (K-1)\\ s_1\leq  s_2...\leq s_{q}\leq LR}}\bm b_{\vec{\bm j}, \vec{\bm s} }(\rho)\alpha_{s_1}^{j_1,*}\cdots \alpha_{s_{q}}^{j_{q},*},
\end{align}
where $b_{\vec{\bm j}, \vec{\bm s} }(\rho)=\sum_{s,r=0}^{2^n-1}\rho_{r,s}\bm a_{\vec{\bm j},\vec{\bm s}}$.
On other hand, consider a machine learning procedure with data points $\{(x_i=\vec{\bm\alpha}_i, y_i=f(\vec{\bm\alpha}_i, \rho))\}$. Let the feature map $\Psi(\vec{\bm\alpha})=(\alpha_{s_1}^{j_1}\cdots \alpha_{s_{q}}^{j_{q}})_{\vec{\bm s},\vec{\bm j}}$, and the target is to synthesis the function $f(\vec{\bm\alpha},\rho)=\langle \vec{\bm b}(\rho),\Psi(\vec{\bm\alpha})\rangle$, where $\vec{\bm b}(\rho)\in\mathbb{R}^{LRn^2}$. Consider the loss function 
\begin{align}
    \min\limits_{\vec{\bm b}(\rho)}\lambda\langle\vec{\bm b}(\rho),\vec{\bm b}(\rho)\rangle+\sum\limits_{i=1}^N\left(\langle \vec{\bm b}(\rho),\Psi(\vec{\bm\alpha}_i)\rangle-y_i\right)^2,
\end{align}
where $\lambda>0$ is a hyper-parameter. Define the feature matrix $\Psi=(\Psi(\vec{\bm\alpha}_1),\dots,\Psi(\vec{\bm\alpha}_N))$ and the kernel matrix 
\begin{align}
    {\rm K}=\Psi^{\dagger}\Psi=\left[{\rm K}(\vec{\bm\alpha}_i,\vec{\bm\alpha}_j)\right]_{i,j=1}^N,
\end{align}
where the kernel function
\begin{align}
     {\rm K}(\bm\alpha,\bm\alpha^{\prime})=\sum_{l=0}^K\sum_{1\leq i_1<\cdots<i_q\leq LR}(\alpha_{i_1}\alpha_{i_1}^{\prime}+\cdots+\alpha_{i_q}\alpha_{i_q}^{\prime})^l.
\end{align}
Without loss of generality, we can normalize ${\rm K}(\bm\alpha,\bm\alpha^{\prime})$ enabling ${\rm Tr}(\rm K)=N$. Therefore, the optimal solution 
\begin{align}
    \vec{\bm b}(\rho)_{\rm opt}=\sum\limits_{i=1}^N\sum\limits_{j=1}^N\Psi(\vec{\bm\alpha}_i)\left(\rm K+\lambda I\right)^{-1}_{ij}f(\vec{\bm\alpha}_j).
\end{align}
As a result, the trained machine learning model
\begin{eqnarray}
\begin{split}
g(\vec{\bm x})=\langle\vec{\bm b}(\rho)_{\rm opt},\Psi(\vec{\bm x})\rangle&=\sum\limits_{i=1}^N\sum\limits_{j=1}^N\left(\rm K+\lambda I\right)^{-1}_{ij}{\rm K}(\vec{\bm\alpha}_i,\vec{\bm x})f(\vec{\bm\alpha}_j)\\
&=\sum\limits_{j=1}^N\left(\sum\limits_{i=1}^N\left({\rm K}+\lambda I\right)^{-1}_{ij}{\rm K}(\vec{\bm\alpha}_i,\vec{\bm x})\right)f(\vec{\bm\alpha}_j)\\
&=\sum\limits_{j=1}^N\vec{\bm\beta}_j(\bm x)f(\vec{\bm\alpha}_j)\\&=\sum\limits_{j=1}^N\vec{\bm\beta}_j(\bm x)\langle0^n|U^{\dagger}(\vec{\bm\alpha}_j)\rho U(\vec{\bm\alpha}_j)|0^n\rangle+\mathcal{O}\left(\frac{2^{LRn}}{(K!)^{LR}}\right).
\end{split}
\end{eqnarray}
Now we analyze the prediction error of $g(\vec{\bm x})$ on the domain $[0,2\pi]^{LR}$. Denote 
\begin{eqnarray}
\begin{split}
    \tilde{\epsilon}(\vec{\bm x})=\epsilon(\vec{\bm x})+\left|{\rm Tr}(M(\vec{\bm x})\rho)-f(\vec{\bm x},\rho)\right|&=\left|g(\vec{\bm x})-f(\vec{\bm x},\rho)\right|+\left|{\rm Tr}(M(\vec{\bm x})\rho))-f(\vec{\bm x},\rho)\right|\\
    &\leq \left|g(\vec{\bm x})-f(\vec{\bm x},\rho)\right|+2^{-{\rm poly}(n)},
\end{split}
\end{eqnarray}
and the expected prediction error
\begin{align}
    \mathbb{E}_{\vec{\bm x}}\left[\epsilon(\vec{\bm x})\right]=\frac{1}{N}\sum\limits_{i=1}^N\epsilon(\vec{\bm x}_i)+\left( \mathbb{E}_{\vec{\bm x}}\left[\epsilon(\vec{\bm x})\right]-\frac{1}{N}\sum\limits_{i=1}^N\epsilon(\vec{\bm x}_i)\right).
\end{align}
Using the Cauchy-Schwartz inequality, the above first term can be upper bounded by
\begin{align}
    \frac{1}{N}\sum\limits_{i=1}^N\epsilon(\vec{\bm x}_i)\leq\sqrt{\frac{\lambda^2\sum_{i=1}^N\sum_{j=1}^N\left({\rm K}+\lambda I\right)^{-1}_{ij}y_iy_j}{N}}.
\end{align}
Therefore, if the matrix $\rm K$ is invertable and hyper-parameter $\lambda=0$, the training error is zero. Without loss of generality, we set $\lambda=1/{\rm poly}(n)$.

The generalized error can be characterized by the Rademacher complexity~\cite{mohri2018foundations}, that is
\begin{align}
    \mathbb{E}_{\vec{\bm x}}\epsilon(\vec{\bm x})-\frac{1}{N}\sum\limits_{i=1}^N\epsilon(\vec{\bm x}_i)\leq\sqrt{\frac{\sup( {\rm K}(x,x))\|\vec{\bm a}_{\rm opt}\|_{\Psi}}{N}}=\sqrt{\frac{\|\vec{\bm b}(\rho)_{\rm opt}\|_{\Psi}}{N}},
\end{align}
where $\|\vec{\bm b}(\rho)\|_{\Psi}=\langle\vec{\bm b}(\rho)_{\rm opt},\vec{\bm b}(\rho)_{\rm opt}\rangle$. Ideally, $\vec{\bm b}(\rho)_{\rm opt}=(\bm b_{\vec{\bm j}, \vec{\bm s}}(\rho))$ for $j_1,...,j_q\leq K-1$, $s_1\leq  s_2...\leq s_q\leq LR$, and $\vec{\bm b}(\rho)$ is induced by the polynomial kernel function, therefore the $2$-norm of the vector $(\bm b_{\vec{\bm j}, \vec{\bm s}}(\rho))_{\vec{\bm j}, \vec{\bm s}}$ can be used to estimate $\|\vec{\bm b}(\rho)\|_{\Psi}$. Noting that the multi-variable polynomial function $\sum_{\vec{\bm j}, \vec{\bm s}}\bm b_{\vec{\bm j}, \vec{\bm s} }(\rho)\alpha_{s_1}^{j_1}\cdots \alpha_{s_q}^{j_q}$ belongs to $[0,1]$ for all $\vec{\bm\alpha}\in[0,2\pi]^{LR}$. Let $\vec{\bm\alpha}$ takes value from the bitstring $\{0,1\}^{LR}$, we know that all $\left|\bm b_{\vec{\bm j}, \vec{\bm s} }(\rho)\right|\in[0,1]$. Therefore, we can upper bound $\|\vec{\bm b}(\rho)\|_{\Psi}$ by $(LR)K=(LR)n^2\log n$.

Combine all together, for any $M(\vec{\bm x})=U(\vec{\bm x})|0^n\rangle\langle0^n|U^{\dagger}(\vec{\bm x})$ for $U\in\mathcal{U}_{\mathcal{A}}(R)$ and arbitrary density matrix $\rho$, we have the relationship 
\begin{align}
    \mathbb{E}_{\vec{\bm x}}\left|\sum\limits_{j=1}^N\vec{\bm\beta}_j(\bm x)\langle0^n|U^{\dagger}(\vec{\bm\alpha}_j)\rho U(\vec{\bm\alpha}_j)|0^n\rangle-{\rm Tr}\left(M(\vec{\bm x})\rho\right)\right|\leq\sqrt{\frac{\lambda^2\sum_{i=1}^N\sum_{j=1}^N\left({\rm K}+\lambda I\right)^{-1}_{ij}y_iy_j}{N}}+\sqrt{\frac{LRn^2\log n}{N}}
\end{align}
In the above first term, hyper-parameter $\lambda$ can take arbitrary value, and $\lambda=\sqrt{\lambda_{\min}(\rm K)}/(nN)$ enables that the first term is upper bounded by $1/n$, where $\lambda_{\min}(\rm K)$ represents the minimum eigenvalue of the kernel matrix $\rm K$. In the second term, let $N=\tilde{\mathcal{O}}\left((LR)n^2\epsilon^{-2}\right)$, and the above error can be upper bounded by $\epsilon$.

\section{Proof of quantum learning principles}
\subsection{Proof of Lemma~\ref{lemma:upper_new}}

\label{proof6}
The proof of Lemma~\ref{lemma:upper_new} depends on the intrinsic structure of the QCA model. Let $M_{\rm opt}$ represent an observable
\begin{align}
M_{\rm opt}=\arg\max_{M=V|0^n\rangle\langle0^n|V^{\dagger},V\in\mathcal{U}_{\mathcal{A}}(R)}\left|{\rm Tr}\left(M(\rho_{p,\tilde{R}}-I_n/2^n)\right)\right|.
\end{align}
Given the QCA circuit set $\hat{\rho}_{\rm QCA}(R,\mathcal{A},N)=\{U(\vec{\bm\alpha_j})\}_{j=1}^N$ with $N=LRn^2\epsilon^{-2}$, the intrinsic structure of QCA promises that there exists a vector $\vec{\bm \beta}^{\rm opt}$ such that 
\begin{align}
\left|\sum\limits_{j=1}^N\vec{\bm\beta}^{\rm opt}_j(\vec{\bm x}){\rm Tr}\left( U(\vec{\bm\alpha}_j)P_0U^{\dagger}(\vec{\bm\alpha}_j)\rho\right)-{\rm Tr}\left(M_{\rm opt}\rho\right)\right|\leq\epsilon
\end{align}
for any $n$-qubit density matrix $\rho$ and projector $P_0=|0^n\rangle\langle0^n|$. Denote 
$$M_R(\vec{\bm\beta})=\sum\limits_{i=1}^N\beta_i U(\vec{\bm\alpha}_i)P_0U^{\dagger}(\vec{\bm\alpha}_i),$$ according to the assumption in Lemma~\ref{lemma:upper_new}, we have
\begin{eqnarray}
\begin{split}
\epsilon+ \tilde{\epsilon} &<\min\limits_{\vec{\bm \beta}}\left|\mathbb{E}_{|\Psi_{i}\rangle\sim (\hat{\rho}_{\rm QCA},\vec{\bm q})}\left[{\rm Tr}(M(\vec{\bm \beta})(|\Psi_i\rangle\langle\Psi_i|-\rho_{p,\tilde{R}}))\right]\right|\\
&\leq\left|\mathbb{E}_{|\Psi_{i}\rangle\sim (\hat{\rho}_{\rm QCA},\vec{\bm q})}\left[{\rm Tr}(M(\vec{\bm\beta}_{\rm opt})(|\Psi_i\rangle\langle\Psi_i|-\rho_{p,\tilde{R}}))\right]\right|\\
&\leq\left|\mathbb{E}_{|\Psi_{i}\rangle\sim (\hat{\rho}_{\rm QCA},\vec{\bm q})}\left[{\rm Tr}(M_{\rm opt}(|\Psi_i\rangle\langle\Psi_i|-\rho_{p,\tilde{R}}))\right]+ \epsilon\right|\\
&=\left|\mathbb{E}_{|\Psi_{i}\rangle\sim (\hat{\rho}_{\rm QCA},\vec{\bm q})}\left[{\rm Tr}(M_{\rm opt}(|\Psi_i\rangle\langle\Psi_i|-I_n/2^n))\right]-{\rm Tr}(M_{\rm opt}(\rho_{p,\tilde{R}}-I_n/2^n))+ \epsilon\right|\\
&\leq 1-\frac{1}{2^n}-{\rm Tr}(M_{\rm opt}(\rho_{p,\tilde{R}}-I_n/2^n))+\epsilon,
\end{split}
\end{eqnarray}
where the third line comes from the intrinsic structure of specific quantum circuit architecture, the last inequality comes from $\sum_iq_i=1$ and $\langle\Psi_i|M_{\rm opt}|\Psi_i\rangle\leq 1$. As a result, we have
\begin{align}
{\rm Tr}(M_{\rm opt}(\rho_{p,\tilde{R}}-I_n/2^n))<1-\frac{1}{2^n}-\tilde{\epsilon}.
\end{align}

\subsection{Proof of Lemma~\ref{lemma:lower}}
\label{proof7}
%
According to the assumption, the relationship
\begin{eqnarray}
\begin{split}
     \epsilon&
     \geq\max\limits_{\vec{\bm q},M(\vec{\bm\beta})}\left|\mathbb{E}_{|\Psi_{i}\rangle\sim (\hat{\rho}_{\rm QCA},\vec{\bm q})}{\rm Tr}\left(M(\vec{\bm\beta})(|\Psi_i\rangle\langle\Psi_i|-\rho_{p,\tilde{R}})\right)\right|\\
     &= \max\limits_{\vec{\bm q},M(\vec{\bm\beta})}\left|\mathbb{E}_{|\Psi_{i}\rangle\sim (\hat{\rho}_{\rm QCA},\vec{\bm q})}\left[{\rm Tr}\left(M(\vec{\bm\beta})(|\Psi_i\rangle\langle\Psi_i|-\sigma)\right)-{\rm Tr}\left(M(\vec{\bm\beta})(\rho_{p,\tilde{R}}-\sigma)\right)\right]\right|
\end{split}
\label{Eq:lemmalearn}
\end{eqnarray}
holds, where maximal entangled state $\sigma=I/d$, $d=2^n$ and $M(\vec{\bm\beta})$ is defined as Eq.~\ref{Eq:bayesianloss}. Randomly choose an index $t\in[N]$, and let 
\begin{align}
M_t=\arg\max\limits_{M=V|0\rangle\langle0|V^{\dagger}}{\rm Tr}\left(M(|\Psi_t\rangle\langle\Psi_t|-\sigma)\right),
\label{Eq:Mt}
\end{align}
where $V\in\mathcal{U}_{\mathcal{A}}(R)$.
Based on the intrinsic structure in QCA, there exists a unit vector $\vec{\bm\beta}^{(t)}$ such that 
\begin{align}
\left|{\rm Tr}\left(M_t(|\Psi_t\rangle\langle\Psi_t|-\sigma)\right)-{\rm Tr}\left(M(\vec{\bm\beta}^{(t)})(|\Psi_t\rangle\langle\Psi_t|-\sigma)\right)\right|\leq \epsilon.
\end{align}
Then assign the probability distribution $q_t=1-(N-1)/d$ and $q_j=1/d$ for $j\neq t$. As a result, Eq.~\ref{Eq:lemmalearn} can be further lower bounded by
\begin{align}
    \left|\left(1-\frac{N-1}{d}\right){\rm Tr}\left(M(\vec{\bm\beta}^{(t)})(|\Psi_t\rangle\langle\Psi_t|-\sigma)\right)+\frac{1}{d}\sum\limits_{j\neq t}{\rm Tr}\left(M(\vec{\bm\beta}^{(t)})(|\Psi_j\rangle\langle\Psi_j|-\sigma)\right)-{\rm Tr}\left(M(\vec{\bm\beta}^{(t)})(\rho_{p,\tilde{R}}-\sigma)\right)\right|.
    \label{Eq:54}
\end{align}
Since $|\Psi_t\rangle$ is generated by a $R$-depth quantum circuit $U_t\in\mathcal{U}_{\mathcal{A}}(R)$, then $C_{\epsilon}(|\Psi_{t}\rangle)\leq LR$ which implies
\begin{align}
{\rm Tr}\left[M_t\left(|\Psi_{t}\rangle\langle\Psi_{t}|-\frac{I}{d}\right)\right]\geq 1-\frac{1}{d}-\epsilon,
\label{Eq:55}
\end{align}
then the relationship
\begin{align}
{\rm Tr}\left[M(\vec{\bm\beta}^{(t)})\left(|\Psi_{t}\rangle\langle\Psi_{t}|-\frac{I}{d}\right)\right]\geq{\rm Tr}\left[M_t\left(|\Psi_{t}\rangle\langle\Psi_{t}|-\frac{I}{d}\right)\right]-\epsilon\geq 1-\frac{1}{d}-2\epsilon
\end{align}
holds. Therefore, $\left(1-\frac{N-1}{d}\right){\rm Tr}\left(M(\vec{\bm\beta}^{(t)})(|\Psi_t\rangle\langle\Psi_t|-\sigma)\right) \geq (1-\frac{N-1}{d})(1-\frac{1}{d}-2\epsilon)$. Combining the result
 $\frac{1}{d}\sum\limits_{j\neq t}{\rm Tr}\left(M(\vec{\bm\beta}^{(t)})(|\Psi_j\rangle\langle\Psi_j|-\sigma)\right) \geq \frac{-(N-1)}{d^2},$ where ${\rm Tr}\left(M(\vec{\bm\beta}^{(t)})(|\Psi_j\rangle\langle\Psi_j|-\sigma)\right)>-1/d$, we thus have
\begin{eqnarray}
\begin{split}
{\rm Tr}\left(M(\vec{\bm\beta}^{(t)})(\rho_{p,\tilde{R}}-\sigma)\right) &\geq \left(1-\frac{N-1}{d}\right)\left(1-\frac{1}{d}-2\epsilon\right) + \frac{(N-1)}{d^2} -\epsilon\\
&=1-\frac{1}{d}-3\epsilon-\frac{N-1}{d}\left(1-\frac{2}{d}-2\epsilon\right).
\label{Eq:lower}
\end{split}
\end{eqnarray}
Note that $N=LRn^2\log(n)\epsilon^{-2}$, $d=2^n$, therefore
\begin{align}
\frac{N-1}{d}\left(1-\frac{2}{d}-2\epsilon\right)=\frac{{\rm poly}(n)}{2^n}\left(1-\frac{1}{2^n}-2\epsilon\right)< \mathcal{O}(\epsilon)
\end{align}
for large $n\in\mathbb{Z}_{>0}$ and $\epsilon=1/{\rm poly}(n)$. Finally,
\begin{eqnarray}
\begin{split}
\max\limits_{M=V|0^n\rangle\langle0^n|V^{\dagger}}\left|{\rm Tr}\left(M(\rho_{p,\tilde{R}}-I_n/2^n)\right)\right|&\geq{\rm Tr}\left(M_t(\rho_{p,\tilde{R}}-I_n/2^n)\right)\\
&\geq{\rm Tr}\left(M(\vec{\bm\beta}^{(t)})(\rho_{p,\tilde{R}}-I_n/2^n)\right)-\epsilon\\
&\geq 1-\frac{1}{d}-\mathcal{O}(\epsilon),
\end{split}
\label{Eq: d49}
\end{eqnarray}
where the first inequality is valid since $M_t$ (defined by Eq.~\ref{Eq:Mt}) is one of the instances in the set $\{V|0^n\rangle\langle0^n|V^{\dagger}\}$, the second line comes from the intrinsic structure of QCA, and the third line comes from inequality~\ref{Eq:lower}.
This implies $C^{\rm lim,\mathcal{A}}_{\epsilon}(\rho_{p,\tilde{R}})\leq LR$.

\section{Bayesian Optimization}
\label{proofofBO}
\subsection{Optimization Subroutine}
In the following, we show how to maximize the loss function $\mathcal{L}_R(\vec{\bm\beta})$ via Bayesian optimization on a compact set. Bayesian optimization is composed by two significant components: \((\romannumeral1)\) a statistical model, in general \textit{Gaussian process}, provides a posterior distribution conditioned on a prior distribution and a set of observations over $\mathcal{L}_R(\vec{\bm\beta})$. \((\romannumeral2)\) an \textit{acquisition function} determines the position of the next sample point, based on the current posterior distribution over $\mathcal{L}_R(\vec{\bm\beta})$.

Gaussian process is a set of random variables, where any subset forms a multivariate Gaussian distribution. For the optimization task considered in the main file, the random variables represent the value of the objective function $\mathcal{L}_R(\vec{\bm\beta})$ at the point $\vec{\bm\beta}$.
As a distribution over $\mathcal{L}_R(\vec{\bm\beta})$, a Gaussian process is completely specified by \textit{mean function} and \textit{covariance function}
\begin{equation}
\begin{split}
\mu(\vec{\bm\beta})&=\mathbb{E}_{\vec{\bm\beta}}[\mathcal{L}_R(\vec{\bm\beta})]\\
k(\vec{\bm\beta}, \vec{\bm\beta}^{\prime})&=\mathbb{E}_{\vec{\bm\beta}}[(\mathcal{L}_R(\vec{\bm\beta})-\mu(\vec{\bm\beta}))(\mathcal{L}_R(\vec{\bm\beta}^{\prime})-\mu(\vec{\bm\beta}^{\prime}))],
\end{split}
\end{equation}
and the Gaussian process is denoted as $\mathcal{L}_R(\vec{\bm\beta})\sim\mathcal{GP}(\mu(\vec{\bm\beta}) , k(\vec{\bm\beta},\vec{\bm\beta}^{\prime}))$. Without loss of generality, we assume that the prior mean function $\mu(\vec{\bm\beta})=0$. In the $t$-th iteration step, assuming observations 
${\rm Acc}(t)=\{(\vec{\bm\beta}^{(1)}, y(\vec{\bm\beta}^{(1)})), \ldots, (\vec{\bm\beta}^{(t)}, y(\vec{\bm\beta}^{(t)}))\}$ are accumulated, where \(y(\vec{\bm\beta}^{(i)})=\mathcal{L}_R(\vec{\bm\beta}^{(i)})+\epsilon_i\), with the quantum measurement error $\epsilon_i\sim \mathcal{N}(0,1/4\hat{M})$ for $i\in[t]$, where $\hat{M}$ represents the measurement complexity in our algorithm. Conditioned on the accumulated observations ${\rm Acc}(t)$, the posterior distribution of $\mathcal{L}_R(\vec{\bm\beta})$ is a Gaussian process with \textit{mean function} \(\mu_t(\vec{\bm\beta})=\mathbb{E}_{\vec{\bm\beta}}[\mathcal{L}_R(\vec{\bm\beta})|{\rm Acc}(t)]\) and \textit{covariance function} \(k_t(\vec{\bm\beta},\vec{\bm\beta}^{\prime})=\mathbb{E}_{\vec{\bm\beta}}[(\mathcal{L}_R(\vec{\bm\beta})-\mu(\vec{\bm\beta}))(\mathcal{L}_R(\vec{\bm\beta}^{\prime})-\mu(\vec{\bm\beta}^{\prime}))|{\rm Acc}(t)]\), specified by
\begin{equation}
    \begin{split}
       \mu_t(\vec{\bm\beta})&=\bm k_t^{\mathsf{T}}[\bm K_t+\bm I_t/4\hat{M}]^{-1}\bm y_{1:t}\\
k_t(\vec{\bm\beta},\vec{\bm\beta}^{\prime})&=k(\vec{\bm\beta},\vec{\bm\beta}^{\prime})-\bm k_t^{\mathsf{T}}[\bm K_t+\bm I_t/4\hat{M}]^{-1}\bm k_t,
    \end{split}
    \label{Eq:update}
\end{equation}
where \(\bm k_t=[k(\vec{\bm\beta},\vec{\bm\beta}^{(1)}) \quad \ldots \quad k(\vec{\bm\beta},\vec{\bm\beta}^{(t)})]^{\mathsf{T}}\), the positive definite covariance matrix \(\bm K_t=[k(\vec{\bm\beta}, \vec{\bm\beta}^{\prime})]_{\vec{\bm\beta},\vec{\bm\beta}^{\prime}\in \vec{\bm\beta}_{1:t}}\) with \(\vec{\bm\beta}_{1:t}=\{\vec{\bm\beta}^{(1)},\dots,\vec{\bm\beta}^{(t)}\}\) and \(\bm y_{1:t}=[y(\vec{\bm\beta}^{(1)}),\dots,y(\vec{\bm\beta}^{(t)})]^{\mathsf{T}}\). 
The posterior variance of $\mathcal{L}_R(\vec{\bm\beta})$ is denoted as \(\sigma^2_t(\vec{\bm\beta})=k_t(\vec{\bm\beta},\vec{\bm\beta})\).  
The mean function $\mu_t(\vec{\bm\beta})$ is related to the expected value of $\mathcal{L}_R(\vec{\bm\beta})$, while the covariance $k_t$ estimates the deviations of $\mu_t(\vec{\bm\beta})$ from the value of $\mathcal{L}_R(\vec{\bm\beta})$. Then the prediction is obtained by conditioning the prior Gaussian process on the observations and returns a posterior distribution described by a Gaussian process multivariate distribution. Using the Sherman-Morrison-Woodbury formula~\citep{seeger2004gaussian}, the predictive distribution can be explicitly expressed as Eq.~\ref{Eq:update}. 

In the $t$-th iteration of Bayesian optimization, the acquisition function $\mathcal{A}(\vec{\bm\beta})$ learns from the accumulated observations ${\rm Acc}(t-1)$ and leads the search to the next point $\vec{\bm\beta}^{(t)}$ which is expected to gradually convergence to the optimal parameters of $\mathcal{L}_R(\vec{\bm\beta})$. This procedure is achieved via maximizing $\mathcal{A}(\vec{\bm\beta})$. In detail, the design of acquisition function should consider \textit{exploration} (exploring domains where $\mathcal{L}_R(\vec{\bm\beta})$ has high uncertainty) and \textit{exploitation} (exploring domains where $\mathcal{L}_R(\vec{\bm\beta})$  is expected to have large image value). The upper confidence bound is a widely used acquisition function, which is defined as  
\begin{align}
  \mathcal{A}_{\mathrm{UCB}}(\vec{\bm\beta})=\mu_{t-1}(\vec{\bm\beta})+\sqrt{\kappa_t}\sigma_{t-1}(\vec{\bm\beta}),
\end{align}
and the next point \(\vec{\bm\beta}^{(t)}\) is decided by $\vec{\bm\beta}^{(t)}=\arg\max_{\vec{\bm\beta} \in\mathcal{D}_{\rm domain}} \mathcal{A}_{\mathrm{UCB}}(\vec{\bm\beta})$. Here, $\kappa_t$ is a significant hyper-parameter, and a suitable $\kappa_t$ may lead $\vec{\bm\beta}^{(t)}$ rapidly convergence to $\vec{\bm\beta}_{opt}$. In Theorem~\ref{Theorem:BO}, a specific $\kappa_t=2N\log(t^2N)+2\log(t^2/\delta)$ is used.
Details for maximizing $\mathcal{L}_R(\vec{\bm\beta})$ are shown in Alg~\ref{alg:BO_max}. 

\begin{algorithm*}
\SetKwInOut{Input}{Input}
\SetKwInOut{Output}{Output}
\SetKwFor{While}{while}{do}{}%
\SetKwFor{For}{for}{do}{}
\SetKwIF{If}{ElseIf}{Else}{if}{do}{elif}{else do}{}%
\Input{Noisy quantum state $\rho_{\rm un}$, a quantum state set $\hat{\rho}_{\rm QCA}(R,\mathcal{A},N)$, failure probability $\delta\in (0,1)$, iteration steps $T$, approximation error $\epsilon$
}
\Output{True/False;}
\textbf{Initialize} $\mu_0(\vec{\bm\beta})=0$, $\sigma_0$, the covariance function $k(\cdot, \cdot)$\;
\For{$t=1,2,...,T$}{
Select $\kappa_t=2N\log(t^2N)+2\log(t^2/\delta)$\;

Choose $\vec{\bm\beta}^{(t)}=\arg\max\limits_{\vec{\bm\beta}\in\mathcal{D}_{\beta}}\mu_{t-1}(\vec{\bm\beta})+\sqrt{\kappa_t}\sigma_{t-1}(\vec{\bm\beta})$\;

Estimate $\mathcal{L}_R(\vec{\bm\beta}^{(t)})$ with shadow tomography of $\rho_{\rm un}$ and $\hat{\rho}_{\rm QCA}$ (obtained from $\hat{M}$-snapshot measurements), that is $\left|y(\vec{\bm\beta}^{(t)})-\mathcal{L}_R(\vec{\bm\beta}^{(t)})\right|\leq\epsilon_t$ where $\epsilon_t\sim\mathcal{N}(0,1/4\hat{M})$\;

Update $\mu_t$, $\sigma_t^2$ as Eq.~\ref{Eq:update}\;
}
\If {$y(\vec{\bm\beta}^{(T)})\leq \epsilon$}
{
\Return{$\rm True$}
}
\Else{\Return{$\rm False$}}
\caption{Bayesian Maximize Subroutine, ${\rm BMaxS}(\rho_{\rm un},\hat{\rho}_{\rm QCA}(R,\mathcal{A},N) ,T,\epsilon)$}\label{alg:BO_max}
\end{algorithm*}

\subsection{Proof of Theorem~\ref{Theorem:BO}}
\begin{theorem}[Formal version of Theorem~\ref{Theorem:BO}]
Take the weakly noisy state $\rho_{\rm un}$ and $\hat{\rho}_{\rm QCA}(R,\mathcal{A},N) $ into Alg.~\ref{alg:BO_max}. Pick the failure probability $\delta\in (0,1)$ and let
\begin{align}  \kappa_t=2N\log(t^2N)+2\log(t^2/\delta)
\end{align}
in the $t$-th iteration step, then
the average regret ${\rm avr}_T$ can be upper bounded by 
\begin{align}
{\rm avr}_T\leq\mathcal{O}\left(\sqrt{\frac{4N^2\log^2T+2N\log T\log(\pi^2/(6\delta))}{T}}\right)
\end{align}
with $1-\delta$ success probability. 
\label{Theorem:BO_formal}
\end{theorem}
We need following two lemmas to support our proof.

\vspace{10px}

\begin{lemma}[Lemma~5.1 in \cite{srinivas2012information}]
\label{lemma5.1}
Pick faliure probability $\delta\in(0,1)$ and set $\kappa_t=2\log(\left|\mathcal{D}_{\rm domain}\right|\pi_t/\delta)$, where $\sum_{t\geq 1}\pi_t^{-1}=1$ and $\pi_t>0$. Then
\begin{align}
\left|\mathcal{L}(\vec{\bm\beta})-\mu_{t-1}(\vec{\bm\beta})\right|\leq\kappa_t^{1/2}\sigma_{t-1}(\vec{\bm\beta})
\end{align}
holds for any $t\geq 1$ and $\vec{\bm\beta}\in\mathcal{D}_{\rm domain}$.
\end{lemma}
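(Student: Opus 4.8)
The plan is to exploit the Bayesian modeling assumption that underlies the Gaussian process. Conditioned on the accumulated observations ${\rm Acc}(t-1)$, the random variable $\mathcal{L}(\bm z)$ is marginally Gaussian with mean $\mu_{t-1}(\bm z)$ and variance $\sigma_{t-1}^2(\bm z)$, exactly as recorded in Eq.~\ref{Eq:update}. Hence, for each fixed point $\bm z$ and each fixed step $t$, the standardized quantity $r:=(\mathcal{L}(\bm z)-\mu_{t-1}(\bm z))/\sigma_{t-1}(\bm z)$ is a standard normal random variable. The first step is therefore to invoke the elementary Gaussian tail bound $\Pr[|r|>c]\le e^{-c^2/2}$ for $r\sim\mathcal{N}(0,1)$, which follows from $\Pr[r>c]=\tfrac12\,\mathrm{erfc}(c/\sqrt{2})\le\tfrac12 e^{-c^2/2}$ together with symmetry of the Gaussian density.

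Setting the deviation threshold to $c=\kappa_t^{1/2}$ gives the pointwise bound $\Pr[|\mathcal{L}(\bm z)-\mu_{t-1}(\bm z)|>\kappa_t^{1/2}\sigma_{t-1}(\bm z)]\le e^{-\kappa_t/2}$. With the prescribed choice $\kappa_t=2\log(|\mathcal{D}_{\rm domain}|\pi_t/\delta)$ this failure probability becomes exactly $\delta/(|\mathcal{D}_{\rm domain}|\pi_t)$. The next step is a double union bound. First, because $\mathcal{D}_{\rm domain}$ is finite, summing over its $|\mathcal{D}_{\rm domain}|$ points at a fixed step $t$ cancels the factor $|\mathcal{D}_{\rm domain}|$ and leaves a per-step failure probability of at most $\delta/\pi_t$. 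Second, summing over all time steps $t\ge 1$ and using the summability assumption $\sum_{t\ge1}\pi_t^{-1}=1$ (for instance the choice $\pi_t=\pi^2 t^2/6$) bounds the total failure probability by $\delta\sum_{t\ge1}\pi_t^{-1}=\delta$. Consequently, with probability at least $1-\delta$ the confidence band $|\mathcal{L}(\bm z)-\mu_{t-1}(\bm z)|\le\kappa_t^{1/2}\sigma_{t-1}(\bm z)$ holds simultaneously for every $\bm z\in\mathcal{D}_{\rm domain}$ and every $t\ge1$, which is the claim.

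The main delicacy lies not in the calculation but in the two structural hypotheses that make it go through. The exact Gaussianity of the posterior marginal, and hence the legitimacy of standardizing to $\mathcal{N}(0,1)$, relies on $\mathcal{L}$ being modeled as a draw from a Gaussian process prior with covariance $k(\cdot,\cdot)$ under Gaussian observation noise; this is precisely the setting assumed in the optimization subroutine. The second hypothesis, finiteness of $\mathcal{D}_{\rm domain}$, is what allows the union bound over points to contribute only a multiplicative $|\mathcal{D}_{\rm domain}|$ that is absorbed by the logarithm in $\kappa_t$. I expect the genuine obstacle to surface only when this lemma is later applied in Theorem~\ref{Theorem:BO} to the continuous compact domain $\mathcal{D}_{\bm z}$: one must bridge from the finite-domain statement to the continuous one via a discretization net together with a Lipschitz or smoothness control on $\mathcal{L}$, so that the discretization error is negligible. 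That approximation step, rather than the Gaussian concentration itself, is where the technical care will be required.
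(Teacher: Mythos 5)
Your proof is correct and is essentially the same argument as in the cited source~\cite{srinivas2012information} (the paper does not reprove this lemma but imports it): conditioned on the accumulated observations the posterior marginal of $\mathcal{L}(\bm z)$ is Gaussian, the tail bound $\Pr[|r|>c]\leq e^{-c^2/2}$ with $c=\kappa_t^{1/2}$ gives per-point failure probability $\delta/(\abs{\mathcal{D}_{\rm domain}}\pi_t)$, and the double union bound over the finite domain and over $t\geq 1$ yields the claim with probability at least $1-\delta$ (a qualifier the paper's statement of the lemma omits but clearly intends). Your closing remark is also exactly right: the paper's own technical work occurs in the proof of Theorem~\ref{Theorem:BO}, where the finite-domain lemma is transferred to the continuous compact set via the Lipschitz bound $\abs{\mathcal{L}(\bm z)-\mathcal{L}(\bm z^{\prime})}\leq\abs{\bm z-\bm z^{\prime}}_1$ and the discretization $\mathcal{D}^t_{\rm domain}$ of size $(t^2N)^N$.
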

\vspace{10px}
\begin{lemma}[Lemma~5.4 in \cite{srinivas2012information}]
\label{lemma5.4}
Pick failure probability $\delta\in(0,1)$ and let $\kappa_t$ be defined as in Lemma~\ref{lemma5.1}. Then the following holds with probability $\geq1-\delta$:
\begin{align}
\sum\limits_{t=1}^T4\kappa_t\sigma^2_{t-1}(\vec{\bm\beta}_t)\leq\kappa_T\gamma_T,
\end{align}
where $\gamma_T=\max_{A\in \mathcal{D}_{\rm domain}}\frac{1}{2}\log\left|I+\sigma^{-2}K_A\right|$, and $K_A$ represents the used covariance matrix in Bayesian optimization.
\end{lemma}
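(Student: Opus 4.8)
The plan is to prove the stated bound by identifying $\sum_{t=1}^T\sigma^2_{t-1}(\bm z_t)$ with the Shannon information that the noisy observations $\bm y_{1:T}$ carry about the underlying function values $\bm f_{1:T}=(\mathcal{L}_R(\bm z_1),\dots,\mathcal{L}_R(\bm z_T))$, and then bounding that information by the maximal information gain $\gamma_T$. Throughout I use the paper's convention $\sigma_{\rm noise}=1$, so $\sigma^{-2}=1$; the posterior variance $\sigma^2_{t-1}(\bm z_t)=k_{t-1}(\bm z_t,\bm z_t)$ is the quantity given by Eq.~\ref{Eq:update}. The crucial structural observation is that $\sigma^2_{t-1}(\bm z_t)$ depends only on the \emph{locations} $\bm z_{1:t-1}$ and not on the observed values $\bm y_{1:t-1}$. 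This is exactly what lets the bound hold for the adaptive, UCB-chosen sequence $\{\bm z_t\}$ produced by Alg.~\ref{alg:BO_max}, and in fact makes the inequality deterministic; the phrase ``with probability $\geq 1-\delta$'' is inherited harmlessly from the definition of $\kappa_t$ in Lemma~\ref{lemma5.1}.

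First I would establish the information-theoretic identity. Because $(\bm y_{1:T},\bm f_{1:T})$ is jointly Gaussian under the GP prior with i.i.d.\ $\mathcal{N}(0,1)$ observation noise, the differential entropies are $H(\bm y_{1:T})=\tfrac{1}{2}\log\det\big(2\pi e(\bm K_T+I)\big)$ and $H(\bm y_{1:T}\mid\bm f_{1:T})=\tfrac{1}{2}\log\det(2\pi e\,I)$, where $\bm K_T=[k(\bm z_i,\bm z_j)]_{i,j=1}^T$ is the Gram matrix of the selected points. Subtracting gives $I(\bm y_{1:T};\bm f_{1:T})=\tfrac{1}{2}\log\det(I+\bm K_T)$. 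Since $A=\{\bm z_1,\dots,\bm z_T\}$ is one particular size-$T$ subset of $\mathcal{D}_{\bm z}$, the definition of $\gamma_T$ immediately yields $I(\bm y_{1:T};\bm f_{1:T})\leq\gamma_T$.

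Next I would expand the information gain by the chain rule into per-round contributions, $I(\bm y_{1:T};\bm f_{1:T})=\sum_{t=1}^T I(\bm y_t;f(\bm z_t)\mid\bm y_{1:t-1})$, using that $\bm y_t$ depends on $\bm f_{1:T}$ only through $f(\bm z_t)$. Evaluating each term with the Gaussian posterior of Eq.~\ref{Eq:update}, the value $f(\bm z_t)$ conditioned on $\bm y_{1:t-1}$ is Gaussian with variance $\sigma^2_{t-1}(\bm z_t)$ and the unit-variance noise adds one, so $I(\bm y_t;f(\bm z_t)\mid\bm y_{1:t-1})=\tfrac{1}{2}\log\big(1+\sigma^2_{t-1}(\bm z_t)\big)$. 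To convert the logarithm back into the variance I would use concavity of $s\mapsto\log(1+s)$: since the covariance kernel is normalized so $k(\bm z,\bm z)\leq 1$, and the posterior variance never exceeds the prior variance, $\sigma^2_{t-1}(\bm z_t)\leq 1$; the chord bound then gives $s\leq\tfrac{1}{\log 2}\log(1+s)$ for all $s\in[0,1]$, whence $\sigma^2_{t-1}(\bm z_t)\leq\tfrac{1}{\log 2}\log\big(1+\sigma^2_{t-1}(\bm z_t)\big)$. Summing and inserting the identity above yields $\sum_{t=1}^T\sigma^2_{t-1}(\bm z_t)\leq\tfrac{2}{\log 2}\,I(\bm y_{1:T};\bm f_{1:T})\leq\tfrac{2}{\log 2}\gamma_T$.

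Finally I would pull out the $\kappa_t$ weights. Because $\kappa_t=2N\log(t^2N)+2\log(t^2/\delta)$ is nondecreasing in $t$, we have $\kappa_t\leq\kappa_T$ for every $t\leq T$, so $\sum_{t=1}^T 4\kappa_t\sigma^2_{t-1}(\bm z_t)\leq 4\kappa_T\sum_{t=1}^T\sigma^2_{t-1}(\bm z_t)\leq\tfrac{8}{\log 2}\,\kappa_T\gamma_T$, which is the claimed bound up to the absolute constant $C_1=8/\log(1+\sigma^{-2})=8/\log 2$ absorbed into the displayed statement. I expect the main obstacle to be the rigorous justification of the two Gaussian identities---the determinant form of the information gain and its per-step decomposition into $\tfrac{1}{2}\log(1+\sigma^2_{t-1}(\bm z_t))$---together with the clean observation that $\sigma^2_{t-1}(\bm z_t)$ is value-independent, since this is precisely what makes the chain-rule telescoping valid for the data-dependent acquisition rule rather than only for a fixed design.
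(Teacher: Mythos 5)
Your proof is correct and is essentially the argument of Srinivas et al.\ (their Lemmas~5.3--5.4), which this paper imports by citation rather than proving itself: the Gaussian identity $I(\bm y_{1:T};\bm f_{1:T})=\tfrac{1}{2}\log\det\bigl(I+\sigma_{\rm noise}^{-2}\bm K_T\bigr)$, its chain-rule decomposition into terms $\tfrac{1}{2}\log\bigl(1+\sigma^2_{t-1}(\bm z_t)\bigr)$, the chord bound $s\leq\log(1+s)/\log 2$ on $[0,1]$, and monotonicity of $\kappa_t$, together with your accurate observations that $\sigma^2_{t-1}(\bm z_t)$ depends only on the sampled locations (so the bound is deterministic and the ``$1-\delta$'' is vacuous in this restated form). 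Your explicit constant $C_1=8/\log(1+\sigma_{\rm noise}^{-2})=8/\log 2$ is in fact necessary---the paper's display drops it, and the literal inequality $\sum_{t=1}^T4\kappa_t\sigma^2_{t-1}(\bm z_t)\leq\kappa_T\gamma_T$ can fail by exactly that constant factor (e.g.\ for near-orthogonal points with $\sigma^2_{t-1}(\bm z_t)\approx 1$, where $\gamma_T\approx\tfrac{T}{2}\log 2$)---but the omission is harmless downstream since Theorem~\ref{Theorem:BO} only invokes the lemma inside $\mathcal{O}(\cdot)$ bounds.
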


We first consider the continuity of the loss function $\mathcal{L}(\vec{\bm\beta})$. Considering $\sum_{i=1}^N\bm\beta_i=1$, the gradient function can be upper bounded by
\begin{align}
    \left|\frac{\partial\mathcal{L}(\vec{\bm\beta})}{\partial\beta_j}\right|=\left|\mathbb{E}_{\hat{\rho}_i\sim\vec{\bm q}}{\rm Tr}\left[\hat{\rho}_j(\hat{\rho}_i-\rho_{\rm un})\right]\right|\leq1,
\end{align}
where the inequality comes from $\left|{\rm Tr}\left[\hat{\rho}_j(\hat{\rho}_i-\rho_{\rm un})\right]\right|\leq1$. Therefore, the relationship
\begin{align}
   \left|\mathcal{L}(\vec{\bm\beta})-\mathcal{L}(\vec{\bm\beta}^{\prime})\right|\leq\left|\vec{\bm\beta}-\vec{\bm\beta}^{\prime}\right|_1
    \label{Eq:functionContin}
\end{align}
holds for any $\vec{\bm\beta},\vec{\bm\beta}^{\prime}\in\mathcal{D}_{\rm domain}$. Now let us choose a discretization $\mathcal{D}^t_{\rm domain}$ of size $(\tau_t)^N$ such that for all $\vec{\bm\beta}\in \mathcal{D}_{\rm domain}$, 
\begin{align}
    \|\vec{\bm\beta}-[\vec{\bm\beta}]_t\|_1\leq N/\tau_t,
    \label{Eq:variableDis}
\end{align}
where $[\vec{\bm\beta}]_t$ represents the closest point in discretization on $\mathcal{D}^t_{\rm domain}$ to $\vec{\bm\beta}$. Combine Eqs.~\ref{Eq:functionContin}-\ref{Eq:variableDis}, one obtains
\begin{align}
\left|\mathcal{L}(\vec{\bm\beta})-\mathcal{L}([\vec{\bm\beta}]_t)\right|\leq N\tau_t^{-1}= t^{-2},
\label{Eq:d75}
\end{align}
where the last inequality comes from $\tau_t=t^2N$, furthermore, $\left|\mathcal{D}^t_{\rm domain}\right|=(t^2 N)^N$.

Using Lemma~\ref{lemma5.1}, we know that 
\begin{align}
\kappa_t=2\log(\left|\mathcal{D}_{\rm domain}\right|a_t/\delta)
\label{Eq:kappa}
\end{align} 
enables the relationship
\begin{align}
    \left|\mathcal{L}(\vec{\bm\beta})-\mu_{t-1}(\vec{\bm\beta})\right|\leq \sqrt{\kappa_t}\sigma_{t-1}(\vec{\bm\beta})
    \label{Eq:d77}
\end{align}
holds with probability at least $1-\delta$, where $a_t>0$ and $\sum_{t\geq 1}a_t^{-1}=1$. A common selection is $a_t=\pi t^2/6$. Taking $\left|\mathcal{D}^t_{\rm domain}\right|=(t^2N)^N$ into $\kappa_t$ (Eq.~\ref{Eq:kappa}), one obtains 
\begin{align}
    \kappa_t=2\log\left((t^2 N)^Na_t/\delta\right).
\end{align}
Combine Eq.~\ref{Eq:d75} and~\ref{Eq:d77}, we have
\begin{align}
    \left|\mathcal{L}(\vec{\bm\beta}^{*})-\mu_{t-1}([\vec{\bm\beta}^*]_t)\right|\leq\left|\mathcal{L}(\vec{\bm\beta}^{*})-\mathcal{L}([\vec{\bm\beta}^{*}]_t)\right|+ \left|\mathcal{L}([\vec{\bm\beta}^{*}]_t)-\mu_{t-1}([\vec{\bm\beta}^*]_t)\right|\leq t^{-2}+\sqrt{\kappa_t}\sigma_{t-1}([\vec{\bm\beta}^{*}]_t)
    \label{Eq:90}
\end{align}
for any $t\geq 1$, where $\vec{\bm\beta}^*=\arg\max_{\vec{\bm\beta}\in\mathcal{D}_{\beta}}\mathcal{L}(\vec{\bm\beta})$. Now we connect the relationship between above inequality to the regret bound.

By the definition of $\vec{\bm\beta}^{(t)}$ (maximizing the $\mathcal{A}_{\rm UCB}(\vec{\bm\beta})$ in the $t$-th step): $\mu_{t-1}(\vec{\bm\beta}^{(t)})+\sqrt{\kappa_t}\sigma_{t-1}(\vec{\bm\beta}^{(t)})\geq\mu_{t-1}([\vec{\bm\beta}^{*}]_t)+\sqrt{\kappa_t}\sigma_{t-1}([\vec{\bm\beta}^*]_t)$. Also, by Eq.~\ref{Eq:90}, we have $\mathcal{L}(\vec{\bm\beta}^*)\leq\mu_{t-1}([\vec{\bm\beta}^*]_t)+\sqrt{\kappa_t}\sigma_{t-1}([\vec{\bm\beta}^*]_t)+1/t^2$. Therefore, instantaneous regret
\begin{eqnarray}
\begin{split}
r_t&=\mathcal{L}(\vec{\bm\beta}^*)-\mathcal{L}(\vec{\bm\beta}^{(t)}) \\\nonumber
&\leq\sqrt{\kappa_t}\sigma_{t-1 }(\vec{\bm\beta}^{(t)})+1/t^2+\mu_{t-1}(\vec{\bm\beta}^{(t)})-\mathcal{L}(\vec{\bm\beta}^{(t)})\\
&\leq2\sqrt{\kappa_t}\sigma_{t-1 }(\vec{\bm\beta}^{(t)})+1/t^2,
\end{split}
\end{eqnarray}
where the last inequality comes from Eq.~\ref{Eq:d77}.
Using Lemma~\ref{lemma5.4}, $\sum_{t=1}^T4\kappa_t\sigma_{t-1}(\vec{\bm\beta}_t)\leq \kappa_T\gamma_T$, where $$\gamma_T=\max_{A\in \mathcal{D}_{\rm domain}}\frac{1}{2}\log\left|I+2\hat{M}K_A\right|,$$ $K_A$ represents the used covariance matrix in ${\rm BMaxS}(\rho_{p,\tilde{R}}, \hat{\rho}_{\rm QCA}(R,\mathcal{A},N) ,T)$ and $\hat{M}$ represents the number of measurement in generating the shadow tomography. In the linear function case, $K_A$ can be selected as the polynomial kernel function, and $\gamma_T=\mathcal{O}(N\log(T))$~\cite{vakili2021information}. Furthermore, using the Cauchy-Schwartz inequality to $\sum_{t=1}^T4\kappa_t\sigma_{t-1}^2(\vec{\bm\beta}_t)\leq \kappa_T\gamma_T$, one obtains $\sum_{t=1}^T2\kappa_t^{1/2}\sigma_{t-1}(\vec{\bm\beta}_t)\leq \sqrt{T\kappa_T\gamma_T}$

Finally, we have the result
\begin{eqnarray}
\begin{split}
    {\rm avr}_T&=\frac{1}{T}\sum\limits_{t=1}^Tr_t\\
    &\leq\frac{1}{T}\sum\limits_{t=1}^T2\kappa_t^{1/2}\sigma_{t-1}(\vec{\bm\beta}_t)+\frac{\pi^2}{6T}\\
    &\leq\sqrt{\frac{\kappa_T\gamma_T}{T}}+\frac{\pi^2}{6T}\\
&=\mathcal{O}\left(\sqrt{\frac{2N^2\log(T^2N)\log(T)+2N\log(T^2\pi^2/(6\delta))\log(T)}{T}}\right)\\
&\leq\mathcal{O}\left(\sqrt{\frac{4N^2\log^2T+2N\log T\log(\pi^2/(6\delta))}{T}}\right),
\end{split}
\end{eqnarray}
where the first inequality results from $\sum_{t\geq 1}t^{-2}=\pi^2/6$ and the last inequality comes from $N\leq T$.

\section{Proof of Sample Complexity Lower bound}
\label{Sec:samplelowerbound}
we require the following lemmas to support our proof.
\begin{lemma}[Lemma~6 in~\cite{wang2021noise}]
    Consider a single instanoise channel $\mathcal{N}=\mathcal{N}_1\otimes\cdots\otimes\mathcal{N}_n$ where each local noise channel $\{\mathcal{N}_j\}_{j=1}^n$ is a Pauli noise channel that satisfies $\mathcal{N}_j(\sigma)=q_{\sigma}\sigma$ for $\sigma\in\{X,Y,Z\}$ and $q_{\sigma}$ be the Pauli strength. Then we have
    \begin{align}
        D_2\left(\mathcal{N}(\rho)\|\frac{I^{\otimes n}}{2^n}\right)\leq q^{2c}D_2\left(\rho\|\frac{I^{\otimes n}}{2^n}\right),
    \end{align}
    where $D_2(\cdot\|\cdot)$ represents the $2$-Renyi relative entropy, $q=\max_{\sigma}q_{\sigma}$ and $c=1/(2\ln 2)$.
    \label{lemma:renyiineq}
\end{lemma}

\begin{lemma}
    Given an arbitrary $n$-qubit density matrix and maximally mixed state $I^{\otimes n}/2^n$, we have
    \begin{align}
        D\left(\rho\|I^{\otimes n}/2^n\right)\leq D_2\left(\rho\|I^{\otimes n}/2^n\right),
    \end{align}
    where $D(\cdot\|\cdot)$ denotes the relative entropy and $D_2(\cdot\|\cdot)$ denotes the $2$-Renyi relative entropy.
    \label{lemma:renyi}
\end{lemma}
\emph{Proof:} Given quantum states $\rho$ and $\sigma$, the quantum $2$-Renyi entropy 
    \begin{align}
        D_2(\rho\|\sigma)=\log{\rm Tr}\left[\left(\sigma^{-1/4}\rho\sigma^{-1/4}\right)^2\right].
    \end{align}
    When $\sigma=I^{\otimes n}/2^n$, we have $D_2(\rho\|I^{\otimes n}/2^n)=\log{\rm Tr}\left[\left((I^{\otimes n}/2^n)^{-1}\rho^2\right)\right]=n+\log{\rm Tr}[\rho^2]$. Noting that the function $y=x^2-x\log x\geq 0$ when $x\in[0,1]$, and this implies ${\rm Tr}(\rho^2)\geq {\rm Tr}(\rho\log \rho)$. Finally, we have
    \begin{align}
        D\left(\rho\|I^{\otimes n}/2^n\right)=n+{\rm Tr}\left[\rho\log\rho\right]+n\leq {\rm Tr}\left[\rho^2\right]+n=D_2\left(\rho\|I^{\otimes n}/2^n\right).
    \end{align}
\vspace{10px}
\begin{task}
    Consider a pure quantum state $\rho_0$ and two quantum circuit $\mathcal{C}_1$ and $\mathcal{C}_2$ with $R_1$ and $R_2$ depth ($R_1<R_2$), respectively. Each quantum circuit is affected by by $p$-strength Pauli channel in each layer. Suppose that a distinguisher is given access to copies of the quantum states $\Phi_{\mathcal{C}_1}(\rho_0)$ and $\Phi_{\mathcal{C}_2}(\rho_0)$, then what is the fewest number of copies sufficing to identify quantum states $\Phi_{\mathcal{C}_1}(\rho_0)$ and $\Phi_{\mathcal{C}_2}(\rho_0)$ with high probability?
    \label{problem2}
\end{task}
Obviously, if a quantum state complexity prediction problem can predict the complexity $R_1$ and $R_2$, then we can classify quantum states $\Phi_{\mathcal{C}_1}(\rho_0)$ and $\Phi_{\mathcal{C}_2}(\rho_0)$ easily. The sample complexity of Task~\ref{problem2} thus can be used to benchmark the sample complexity lower bound of the quantum state complexity prediction problem.

Now we prove the sample complexity lower bound to Task~\ref{problem2}.
We consider the sample complexity $m$ in distinguishing quantum states $\Phi_{\mathcal{C}_1}(\rho_0)$ and $\Phi_{\mathcal{C}_2}(\rho_0)$. When their trace distance is quite large, let $\delta\in(0,1)$ and we have
\begin{eqnarray}
\begin{split}
    1-\delta&\leq\frac{1}{2}\left\|\Phi_{\mathcal{C}_1}(\rho_0)^{\otimes m}-\Phi_{\mathcal{C}_2}(\rho_0)^{\otimes m}\right\|_1\\
    &\leq \frac{1}{2}\left(\left\|\Phi_{\mathcal{C}_1}(\rho_0)^{\otimes m}-(I_n/2^n)^{\otimes m}\right\|_1+\left\|\Phi_{\mathcal{C}_2}(\rho_0)^{\otimes m}-(I_n/2^n)^{\otimes m}\right\|_1\right)\\
    &\leq \frac{1}{\sqrt{2}}\left(D^{1/2}\left(\Phi^{\otimes m}_{\mathcal{C}_1}(\rho_0)\|(I_n/2^n)^{\otimes m}\right)+D^{1/2}\left(\Phi^{\otimes m}_{\mathcal{C}_2}(\rho_0)\|(I_n/2^n)^{\otimes m}\right)\right),
\end{split}
\end{eqnarray}
where the second line comes from the triangle inequality and the third line comes from the Pinsker's inequality. Using Lemmas~\ref{lemma:renyiineq} and~\ref{lemma:renyi}, we have 
\begin{eqnarray}
    \begin{split}
         1-\delta&\leq \frac{1}{\sqrt{2}}\left(D^{1/2}_2\left(\Phi^{\otimes m}_{\mathcal{C}_1}(\rho_0)\|(I_n/2^n)^{\otimes m}\right)+D^{1/2}_2\left(\Phi^{\otimes m}_{\mathcal{C}_2}(\rho_0)\|(I_n/2^n)^{\otimes m}\right)\right)\\
         &\leq \frac{\sqrt{nm}}{\sqrt{2}}((1-p)^{cR_1}+(1-p)^{cR_2})\\
         &\leq \sqrt{2nm}(1-p)^{cR_1},
    \end{split}
\end{eqnarray}

As a result we have
\begin{align}
  m\geq\frac{(1-p)^{-2cR_1}(1-\delta)^2}{2n}.
\end{align}

\section{Review of Shadow tomography}
\label{Shadow} 
In quantum computation, the basic operators are the Pauli operators $\{I,\sigma^x,\sigma^y,\sigma^z\}$ which provide a basis for the density operators of a single qubit as well as for the unitaries that can be applied to them. For an $n$-qubit case, one can construct the Pauli group according to
$$\mathbf{P}_n=\{e^{i\alpha\pi/2}\sigma^{j_1}\otimes...\otimes\sigma^{j_n}|j_k\in\{I,x,y,z\}, 1\leq k\leq n\},$$
where $e^{i\alpha\pi/2}\in\{0,\pm1,\pm i\}$ is a global phase. Then the Clifford group $\mathbf{Cl}(2^n)$ is defined as the group of unitaries that normalize the Pauli group:
$$\mathbf{Cl}(2^n)=\left\{U|U\mathbf{P}^{(1)}U^{\dagger}=\mathbf{P}^{(2)}, \left(\mathbf{P}^{(1)}, \mathbf{P}^{(2)}\in\mathbf{P}_n\right)\right\}.$$ The $n$-qubit Clifford gates are then defined as elements in the Clifford group $\mathbf{Cl}(2^n)$, and these Clifford gates compose the Clifford circuit.

Randomly sampling Clifford circuit $U$ can reproduce the first $3$ moments of the full Clifford group endowed with the Haar measure $\mathbf{d}\mu(U)$ which is the unique left- and right- invariant measure such that
$$\int_{\mathbf{Cl}(2^n)}\mathbf{d}\mu(U)f(U)=\int\mathbf{d}\mu(U)f(VU)=\int\mathbf{d}\mu(U)f(UV)$$
for any $f(U)$ and $V\in\mathbf{Cl}(2^n)$. Using this property, one can sample Clifford circuits $U\in\mathbf{Cl}(2^n)$ with the probability ${\rm{Pr}}(U)$, and the corresponding expectation ${\rm{E}}_{U\in\mathbf{Cl}(2^n)}[\left(U\rho U^{\dagger}\right)^{\otimes t}]$ can be expressed as
$$\sum\limits_{U\in\mathbf{Cl}(2^n)}{\rm{Pr}}(U)\left(U\rho U^{\dagger}\right)^{\otimes t}=\int_{\mathbf{Cl}(2^n)}\mathbf{d}\mu(U)\left(U\rho U^{\dagger}\right)^{\otimes t},$$
for any $n$-qubit density matrix $\rho$ and $t=1,2,3$, where ${\rm{Pr}}(U)$ indicates the probability on sampling $U$. The right hand side of the above equation can be evaluated explicitly by representation theory, this thus yields a closed-form expression for sampling from a Clifford group.

To extract meaningful information from an $n$-qubit unknown quantum state $\rho$, the shadow tomography technique was proposed by \cite{huang2020predicting}. The Clifford sampling is implemented by repeatedly performing a simple measurement procedure: apply a random unitary $U\in \mathbf{Cl}(2^n)$ to rotate the state $\rho$ and perform a $\sigma^z$-basis measurement. The number of repeating times of this procedure is defined as the Clifford sampling complexity. On receiving the $n$-bit measurement outcome $|b\rangle\in\{0,1\}^n$, according to the Gottesman-Knill theorem \citep{gottesman1997stabilizer}, we can efficiently store an classical description of $U^{\dagger}|b\rangle\langle b|U$ in the classical memory. This classical description encodes meaningful information of the state $\rho$ from a particular angle, and it is thus instructive to view the average mapping from $\rho$ to its classical snapshot $U^{\dagger}|b\rangle\langle b|U$ as a quantum channel:
\begin{align}
\mathcal{M}(\rho)=\textmd{E}_{U\in\mathbf{Cl}(2^n)}\left(\textmd{E}_{b\in\{0,1\}^n}[U^{\dagger}|b\rangle\langle b|U]\right),
\end{align}
where the quantum channel $\mathcal{M}$ depends on the ensemble of unitary transformation, and the quantum channel $\mathcal{M}$ can be further expressed as
\begin{align}
\mathcal{M}(\rho)=\textmd{E}_{U\in\mathbf{Cl}(2^n)}\sum\limits_{\widehat{b}\in\{0,1\}^{n}}\langle \widehat{b}|U\rho U^{\dagger}|\widehat{b}\rangle U^{\dagger}|\widehat{b}\rangle\langle\widehat{b}|U=\frac{\rho+{\rm{Tr}}(\rho)I}{(2^n+1)2^n},
\end{align}
where $I$ indicates the $2^n\times 2^n$ identity matrix.
Therefore the inverse of quantum channel $\mathcal{M}^{-1}(\rho)=(2^n+1)\rho-I$, and an estimation of $\rho$ is defined as
$$\widehat{\rho}=\mathcal{M}^{-1}\left(U^{\dagger}|b\rangle\langle b|U\right).$$
Repeat this procedure $M$ times results in an array of Clifford samples of $\rho$:
\begin{align}
S(\rho;M)=\left\{\widehat{\rho}_1=\mathcal{M}^{-1}\left(U_1^{\dagger}|b_1\rangle\langle b_1|U_1\right),...,\widehat{\rho}_M=\mathcal{M}^{-1}\left(U_M^{\dagger}|b_M\rangle\langle b_M|U_M\right)\right\},
\end{align}
and $\frac{1}{M}\sum_{m=1}^M\hat{\rho}_m$ is defined as the \emph{classical shadow} of the quantum state $\rho$.

\section{Numerical simulations for stress-testing}
Here, we conducted stress testing of the quantum learning algorithm in this study. Specifically, we applied the algorithm to estimate the complexity lower bound of quantum states under noisy environments for one-dimensional dynamical evolution. We tested the transverse-field Ising model with system sizes of $1\times 6$, $1\times 8$, and $1\times 10$, and the numerical results are presented in Figure~\ref{fig:simulation_3}. Our findings indicate that the complexity lower bound of noisy quantum circuits does not grow linearly with circuit depth, which is consistent to the results given by Figure~\ref{fig:simulation_2} in the main file. 

Furthermore, we observed that as the system size and complexity increase, the curves corresponding to two different quantum states, $\tilde{R}=2$ and $\tilde{R}=10$, exhibit similar overall trends for the system size $n\in\{6, 8, 10\}$. Moreover, the variance of the algorithm does not increase with the growing system size but remains consistently stable. At the same time, the gap between the curves of quantum states with different complexities does not show a closing trend as the system size grows. These characteristics demonstrate the robustness of our learning algorithm against increases in system size and complexity. Therefore, our algorithm remains reliable for predicting the complexity of large-scale weakly noisy quantum states.

\begin{figure}[t]
\centering
\includegraphics[width=\textwidth]{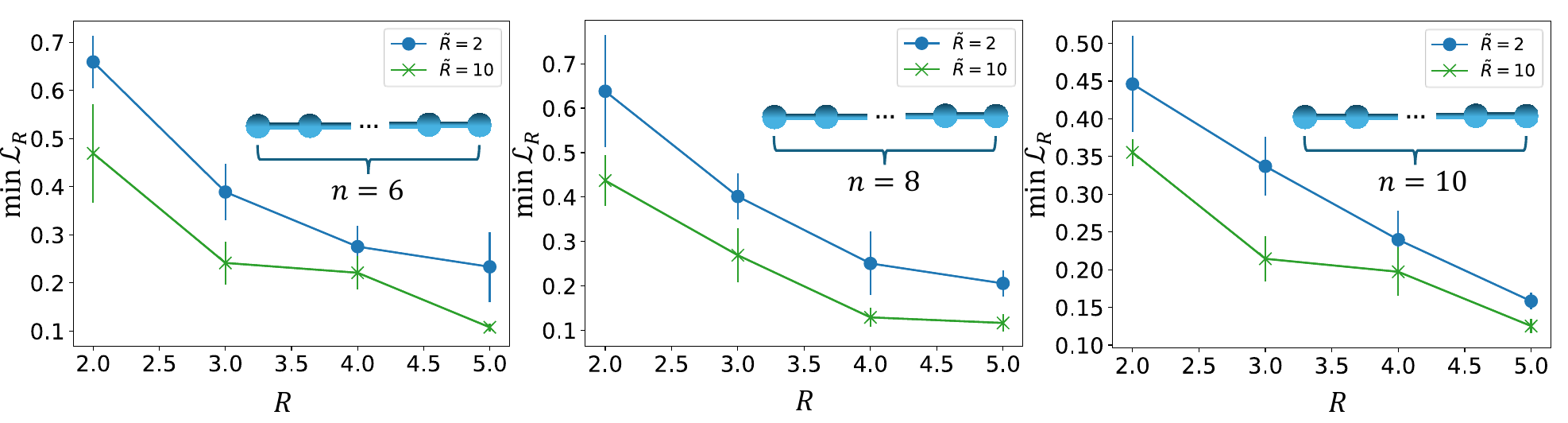}
  \caption{These three subfigures illustrate the trend of the function $\min_{\vec{\bm\beta}}\mathcal{L}_R$ as it varies with the circuit depth $R$ of QCA set.In the stress-testing experiments, we focus on the Hamiltonian dynamics of 1D transverse field Ising models, with system sizes ranging from 6 to 10. In all cases, we observe that the weakly noisy state with $\tilde{R}=2$ exhibits a larger quantum state complexity lower bound compared to the weakly noisy state with $\tilde{R}=10$.}
  \label{fig:simulation_3}
\end{figure}

\end{document}